\newtheorem{assumption}{Assumption}
\newtheorem{theorem}{Theorem}
\newtheorem{definition}{Definition}
\newtheorem{proof}{Proof}
\def\eqref#1{equation~\ref{#1}}
\def\1{\bm{1}}
\def\va{{\bm{a}}}
\def\vb{{\bm{b}}}
\DeclareMathAlphabet{\mathsfit}{\encodingdefault}{\sfdefault}{m}{sl}
\SetMathAlphabet{\mathsfit}{bold}{\encodingdefault}{\sfdefault}{bx}{n}
\DeclareMathOperator*{\argmax}{arg\,max}
\DeclareMathOperator*{\argmin}{arg\,min}
\title{Sound Adversarial Audio-Visual Navigation}
\author{
Yinfeng Yu\textsuperscript{1,3}, Wenbing Huang\textsuperscript{2}, Fuchun Sun\thanks{Corresponding author: Fuchun Sun.}~\textsuperscript{1}, Changan Chen\textsuperscript{4}, Yikai Wang\textsuperscript{1,5}, Xiaohong Liu\textsuperscript{1}\\
\textsuperscript{1} Beijing National Research Center for Information Science and Technology (BNRist),\\
~~~State Key Lab on Intelligent Technology and Systems,\\
~~~Department of Computer Science and Technology, Tsinghua University\\
\textsuperscript{2} Institute for AI Industry Research (AIR), Tsinghua University\\
\textsuperscript{3} College of Information Science and Engineering, Xinjiang University   \\ \textsuperscript{4} UT Austin \hspace{2mm} \textsuperscript{5} JD Explore Academy, JD.com\\ 
\texttt{yyf17@mails.tsinghua.edu.cn, hwenbing@126.com,}\\ \texttt{fcsun@mail.tsinghua.edu.cn} \\
}
\newtheorem{property}{Property}
\begin{document}

\maketitle

\begin{abstract}
Audio-visual navigation task requires an agent to find a sound source in a realistic, unmapped 3D environment by utilizing egocentric audio-visual observations. Existing audio-visual navigation works assume a clean environment that solely contains the target sound, which, however, would not be suitable in most real-world applications due to the unexpected sound noise or intentional interference. In this work, we design an acoustically complex environment in which, besides the target sound, there exists a sound attacker playing a zero-sum game with the agent. More specifically, the attacker can move and change the volume and category of the sound to make the agent suffer from finding the sounding object while the agent tries to dodge the attack and navigate to the goal under the intervention. Under certain constraints to the attacker, we can improve the robustness of the agent towards unexpected sound attacks in audio-visual navigation. For better convergence, we develop a joint training mechanism by employing the property of a centralized critic with decentralized actors. Experiments on two real-world 3D scan datasets, Replica, and Matterport3D, verify the effectiveness and the robustness of the agent trained under our designed environment when transferred to the clean environment or the one containing sound attackers with random policy. Project: \url{https://yyf17.github.io/SAAVN}.
\end{abstract}
\section{Introduction} \label{sec: introduction}
Audio-visual navigation, as currently a vital task for embodied vision~\citep{visualInteraction02, visualInteraction03, visualInteraction04}, requires the agent to find a sound target in a realistic and unmapped 3D environment by exploring with egocentric audio-visual observations~\citep{VisualNavigation01, VisualNavigation02, VisualNavigation04}.
Inspired by the simultaneous usage of eyes and ears in human exploration~\citep{childLearn01,childLearn02}, audio-visual association benefits the learning of the agent~\citep{AudioVisualNavigation00}.
A recent work \textbf{L}ook, \textbf{L}isten, and \textbf{A}ct (LLA) has proposed a three-step navigation solution of perception, inference, and decision-making~\citep{AudioVisualNavigation01}. 
SoundSpaces is the first work to establish an audio-visual embodied navigation simulation platform equipped with the proposed \textbf{A}udio-\textbf{V}isual embodied \textbf{N}avigation (AVN) baseline that resorts to reinforcement learning~\citep{avn}. 
In response to the long-term exploration problem that is caused by the large layout of the 3D scene and the long distance to the target place, \textbf{A}udio-\textbf{V}isual \textbf{Wa}ypoint \textbf{N}avigation (AV-WaN) proposes an audio-visual navigation algorithm by setting waypoints as sub-goals to facilitate sound source discovering~\citep{wan}. 
Besides, \textbf{S}emantic \textbf{A}udio-\textbf{V}isual nav\textbf{i}gation (SAVi) develops a navigation algorithm in a scene where the target sound is not periodic and has a variable length; that is, it may stop during the navigation process~\citep{savn}.

Despite the fruitful progress in audio-visual navigation, existing works assume an acoustically simple or clean environment, meaning there is no other sound but the source itself. 
Nevertheless, this assumption is hardly the case in real life. 
For example, a kettle in the kitchen beeps to tell the robot that the water is boiling, and the robot in the living room needs to navigate to the kitchen and turn off the stove; while in the living room, two children are playing a game, chuckling loudly from time to time. 
Such an example poses a crucial challenge on current techniques: can an agent still find its way to the destination without being distracted by all non-target sounds around the agent? 
Intuitively, the answer is no if the agent has not been trained in acoustically complex environments as in the example listed above.
Although the answer is no, this ability is what we expect the agent to possess in real life.

In light of these limitations, we propose first to construct such an acoustically complex environment. 
In this environment, we add a sound attacker to intervene.
This sound attacker can move and change the volume and type of the sound at each time step. 
In particular, the objective of the sound attacker is to make the agent frustrated by creating a distraction. 
In contrast, the agent decides how to move at every time step, tries to dodge the sound attack, and explores for the sound target well under the sound attack, as illustrated in Fig.~\ref{fig:titlefig}. 
The competition between the attacker and the agent can be modeled as a zero-sum two-player game. 
Notably, this is not a fair game and is more biased towards the agent for two reasons. 
First, the sound attack is just single-modal and will not intervene in any visual information obtained by the agent. 
Second, as will be specified in our methodology, the sound volume of the attacker is bounded via a relative ratio (less than 1) of the sound target. 
We anticipate constraining the attacker's power and encouraging it to intervene but not defeat the agent with these two considerations. 
With such a design, we can improve the agent's robustness between the agent and the sound attacker during the game.  
On the other hand, our environment is more demanding than reality since there are few attackers in our lives. 
Instead, most behaviors, such as someone walking and chatting past the robot, are not deliberately embarrassing the robot but just a distraction to the robot, exhibiting weaker intervention strength than our adversarial setting. 
Even so, our experiments reveal that an agent trained in a worst-case setting can perform promisingly when the environment is acoustically clean or contains a natural sound intervenor using a random policy. 
On the contrary, the agent trained in a clean environment becomes disabled in an acoustically complex environment. 

Our training algorithm is built upon the architecture by~\citep{avn}, with a novel decision-making branch for the attacker. 
Training two agents separately~\citep{ma_iql} leads to divergence. 
Hence we propose a joint Actor-Critic (AC) training framework.
We define the policies for the attacker based on three types of information: position, sound volume, and sound category. 
Exciting discoveries from experiments demonstrate that the joint training converges promisingly in contrast to the independent training counterpart. 

This work is the first audio-visual navigation method with a sound attacker to the best of our knowledge.
To sum up, our contributions are as follows.
\begin{itemize}
    \item We construct a sound attacker to intervene environment for audio-visual navigation that aims to improve the agent's robustness. In contrast to the environment used by prior experiments~\citep{avn}, our setting better simulates the practical case in which there exist other moving intervenor sounds. 
    \item We develop a joint training paradigm for the agent and the attacker.
    Moreover, we have justified the effectiveness of this paradigm, both theoretically and empirically.
    \item Experiments on two real-world 3D scenes, Replica~\citep{replica} and Matterport3D~\citep{matterport3d} validate the effectiveness and robustness of the agent trained under our designed environment when transferred to various cases, including the clean environment and the one that contains sound attacker with random policy.
\end{itemize}
\begin{figure}[t!] 
  \vspace{-0.36in}
  \centering
    \includegraphics[width=1\linewidth]{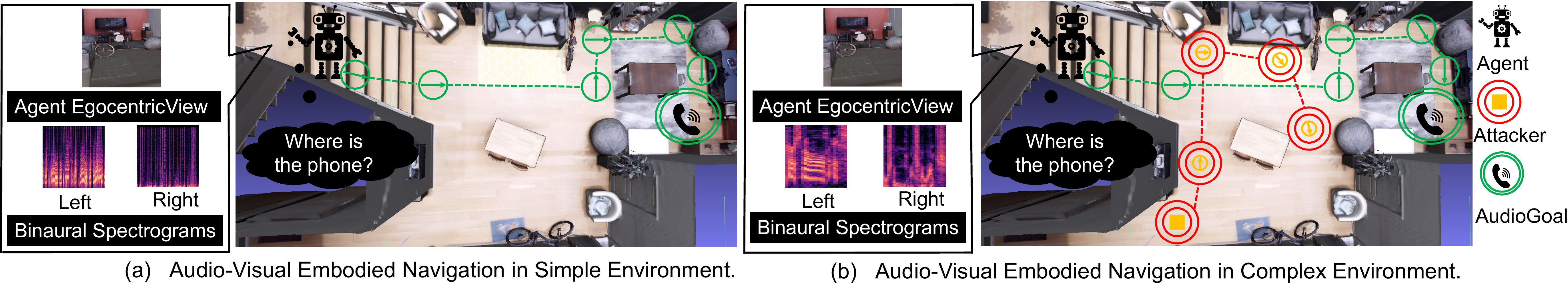}
  \caption{\textbf{Comparison of audio-visual embodied navigation in clean and complex environment. }
(a) Audio-visual embodied navigation in an acoustically clean environment: The agent navigates while only hearing the sound emitted by the source object.
(b) Audio-visual navigation in an acoustically complex environment: The agent navigates with the audio-visual input from the source object, with the sound attacker making sounds simultaneously. }
\label{fig:titlefig}
\vspace{-0.16in}
\end{figure}
\section{Related Work}
We introduce the research related to our work, including audio-visual navigation, adversarial training in RL, and multi-agent learning.

\textbf{Audio-visual embodied navigation.} 
This task demands a robot equipped with a camera and a microphone to interact with the environment and navigate to the source sound.  
Existing algorithms towards this task can be divided into two categories according to whether topological maps are constructed or not. 
For the first category~\citep{AudioVisualNavigation01, wan}, LLA~\citep{AudioVisualNavigation01} plans the robot's navigation strategy based on graph-based shortest path planning. 
Following LLA, AV-WaN~\citep{wan} combines dynamically setting waypoints with the shortest path algorithm to solve the long-period audio-visual navigation problem. 
The second category of methods works in the absence of a topological map~\citep{savn, avn}.
In particular, SAVi~\citep{savn} aims to solve the audio-visual navigation problem with temporary sound source by introducing the category information of the source sound and pre-training the displacement from the robot to the sound source; AVN~\citep{avn} constructs the first audio-visual embodied navigation simulation platform---SoundSpaces and makes use of Reinforcement Learning (RL) for the training of the navigation policy.
As presented in the Introduction, all environments used previously (including SoundSpaces) assume clean sound sources, which is hardly the case in real and noisy life. 
By contrast, we build an acoustically complex environment that allows a sound attacker to move and change the volume and category of sound at each time step in an episode.
In this environment, we train the navigation policy of the agent under the sound attack, which delivers a more robust solution in real applications. 

\textbf{Adversarial attacks and adversarial training in RL.} 
In general, adversarial attacks~\citep{tian2021can}$\,$/training in RL are divided into two classes:
learning to attack~\citep{imgat1, imgat2, imgat3, imgat5} and learning to defence~\citep{2-1-agent-attack-DRL,2-3-agent-attack-DRL,sa_mdp0} that targets on learn a robustness policy by state adversarial through external forces or sensor noise.
Our paper falls into the second class. 
A close work to our method is \textbf{S}tate \textbf{A}dversarial MDP (SA-MDP)~\citep{sa_mdp0} that leverages sensor noise in vision to improve the robustness of the algorithm. 
The main difference between our method and SA-MDP is that our method precisely permits a sound attacker to move in the room and employs the resulting sound as a distractor, while in SA-MDP, the adversarial state is created arbitrarily and thus not necessarily fits the actual scene. 
Besides, SA-MDP initially copes with the visual modality; hence, we have changed SA-MDP for the attack of the sound modality for the comparison with our method (See \textsection~\ref{sec: baselines}).   

\textbf{Multi-agent learning.}  
We design two frameworks similar to independent learning~\citep{ma_iql} and multi-agent learning~\citep{ma_vdn}. 
However, the framework of independent learning does not converge (See \textsection~\ref{sec: experiments}).
We formulate our learning algorithm as a two-player game employed the property of a centralized critic with decentralized actors~\citep{dop} to guarantee the training convergence of our method in theory (See Theorem~\ref{theory: obsbound}).
\section{Sound Adversarial Audio-Visual Navigation}
\subsection{Problem definition}
\paragraph{Problem modeling of ours.}
We model the agent as playing against an attacker in a two-player Markov game~\citep{game-1}. 
We denote the agent and attacker by superscript $\omega$ and $\nu\,$, respectively. 
The game $\mathcal{M} = (\mathcal{S}, (\mathcal{A}^{\omega}, \mathcal{A}^{\nu}) , \mathcal{P},(\mathcal{R}^{\omega}, \mathcal{R}^{\nu}))$ consists of state set $\mathcal{S}$, action sets $\mathcal{A}^{\omega}\,$ and $\mathcal{A}^{\nu}\,$, and a joint state transition function $\mathcal{P}\,$ : $\mathcal{S} \times \mathcal{A}^{\omega} \times \mathcal{A}^{\nu} \rightarrow \mathcal{S}\,$. 
The reward function $\mathcal{R}^{\omega} : \mathcal{S} \times \mathcal{A}^{\omega} \times \mathcal{A}^{\nu} \times \mathcal{S} \rightarrow \mathbb{R} \,$ for agent and $\mathcal{R}^{\nu} : \mathcal{S} \times \mathcal{A}^{\omega} \times \mathcal{A}^{\nu} \times \mathcal{S} \rightarrow \mathbb{R} \,$ for attacker respectively depends on the current state, next state and both the agent's and the attacker's actions. 
Each player wishes to maximize their discounted sum of rewards, where $\emph{R}^{\omega} = r,\,\emph{R}^{\nu} = -r$. 
$r$ is the reward given by the environment at every time step in an episode.
Our problem is modeled as following (See Fig.~\ref{fig:comp_RL}(c)):
\begin{equation}\label{eq:ouravn}
    \pi^{\star} = (\pi^{\star,\omega},\pi^{\star,\nu}) = \argmax\limits_{\pi^{\omega} \in \Pi^{\omega}}\{\argmin\limits_{\pi^{\nu} \in \Pi^{\nu} }\{G(\pi^{\omega},\pi^{\nu},r)\}\}
\end{equation}
Inspired by value decomposition networks~\citep{ma_vdn} and QMIX~\citep{ma_qmix}, we design $\emph{G}(\pi^{\omega},\pi^{\nu},r)$ as Equation (\ref{eq:gtotal}), where $\emph{G}(\pi^{\omega}, r)$ and $\emph{G}(\pi^{\nu}, r)$ are expected discounted, cumulative rewards of the agent and the attacker respectively.
\begin{equation}\label{eq:gtotal}
	G(\pi^{\omega},\pi^{\nu},r) = [G(\pi^{\omega}, r), G(\pi^{\nu}, r)]
\end{equation}
\begin{figure}[t!]
\vspace{-0.16in}
  \centering
  \includegraphics[width=0.95\linewidth,scale=1.5]{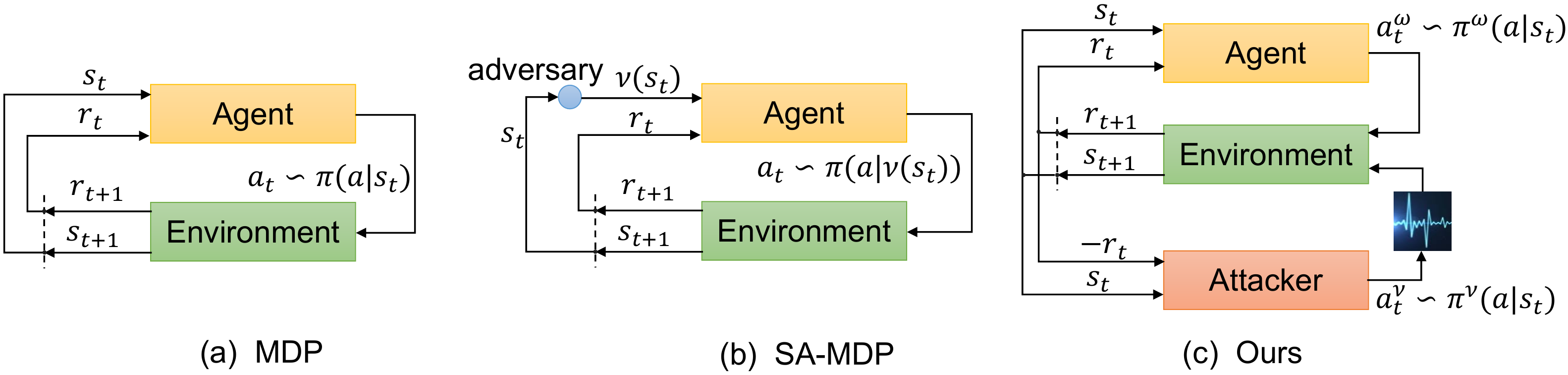}
  \caption{Comparison of different problem modeling methods. 
AVN is models as an MDP.
Our model has an attacker intervention, while the SA-MDP model has an adversary that can map one state in the state space to another state.
  }
  \label{fig:comp_RL}
\end{figure}

\paragraph{SA-MDP.} 
As a reference, we introduce the previous adversarial MDP proposed by~\citep{sa_mdp} as well. 
In SA-MDP (See Fig.~\ref{fig:comp_RL} (b)), we optimize $\delta^{adv} :=\argmax\limits_{\delta:\|\delta\| \le \epsilon} D_{KL}[\pi^{\omega}(a|s) \mid \pi^{\omega}(a|s+\delta)]$. 
Intuitively, the state of SA-MDP is on the ball with radius $\epsilon$.

\subsection{Theory analysis}\label{sec: theory}
Why does our model and training mechanism work? 
Through theoretical analysis, we were pleasantly surprised to find that the observation space of a sound attacker is bounded in the projection space. For more details, see Theorem~\ref{theory: obsbound}.

\textbf{Notations.} $\displaystyle \mathcal{F}, \psi $ stands for short-time Fourier transform and the room impulse response, respectively.
$\varsigma_{\text{g}}$ and $\varsigma_{\nu}$ stand for the waveform signal received by the robot came from a sound source and sound attacker at every time step, respectively.
$I_{\text{g}}$ and $I_{\nu}$ stand for the egocentric visual received by the robot in a clean environment and an acoustically complex environment, respectively.
$O$ and $O^{\prime}$ stand for the observation of what the robot received in a clean environment and an acoustically complex environment, respectively. 
$O$ and $O^{\prime}$ is defined as follows: 
$O = [I_{\text{g}}, \mathcal{F}(\psi_{\text{g}} * \varsigma_{\text{g}})]$ ,
$O^{\prime} = [I_{\nu}, \mathcal{F}(\psi_{\text{g}} * \varsigma_{\text{g}} + \alpha \cdot \psi_{\nu} * \varsigma_{\nu})]$ ,
where $\cdot$ stands for multiply, $\,*$ stands for time domain convolution, $\alpha$ is the value of action $a^{\nu, \text{vol}}\,$, $\,\epsilon$ is a parameter.

Sound attackers do not intervene in the robot's visual modality observation in an acoustically complex environment. 
So we have :
\begin{property} \label{theory: visualassump}
$I_{\text{g}}$ = $I_{\nu}$ at every time step in an given episode. 
\end{property}
\begin{assumption} \label{theory: energynorm}
The energy of the sound source sent to a robot at every time step for a fixed duration from the same sound set has an upper bound $e\,$,  $\|  \mathcal{F}(\varsigma_{\nu}) \|^{2}_{2}  \le e \,$,  $\,\|  \mathcal{F}(\varsigma_{\text{g}}) \|^{2}_{2}  \le e $.
\end{assumption}
The distance of intervention is a map function of $O$ and $O^{\prime}$ :
$\Delta : O \times O^{\prime} \rightarrow \mathbb{R}  $. 
We give a distance definition of $O$ and $O^{\prime}$ as following:
\begin{definition} \label{theory: distancedef}
The distance is : 
$\Delta(O, O^{\prime}) = \| I_{\text{g}} - I_{\nu} \|^{2}_{2} + \| \mathcal{F}(\psi_{\text{g}} * \varsigma_{\text{g}} + \alpha \cdot \psi_{\nu} * \varsigma_{\nu}) - \mathcal{F}(\psi_{\text{g}} * \varsigma_{\text{g}})  \|^{2}_{2} $ 
\end{definition}
Then the sound attacker's observation space $\mathcal{B}_{\epsilon}(O)$ in an acoustically complex environment  based on the above intervention distance is formalized as follows:
$\mathcal{B}_{\epsilon}(O) = \left\{ O^{\prime}:\Delta(O, O^{\prime}) \le \epsilon \right\}   $.
\begin{theorem} \label{theory: obsbound}
The observation space of the sound attacker is bounded in the projection space.
If a sound attacker's observation is projected into a frequency domain space, the projected result is in a sphere with a radius of $\epsilon$.
\end{theorem}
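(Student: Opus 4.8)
The plan is to show that the set $\mathcal{B}_{\epsilon}(O)$, once projected onto its frequency-domain audio component, collapses to an ordinary Euclidean ball, so that the ``complicated'' observation discrepancy is really just a single norm constraint. First I would invoke Property~\ref{theory: visualassump}: since $I_{\text{g}} = I_{\nu}$ at every time step, the visual term $\| I_{\text{g}} - I_{\nu}\|_2^2$ in Definition~\ref{theory: distancedef} vanishes identically, so $\Delta(O, O^{\prime})$ equals the audio term alone, namely $\| \mathcal{F}(\psi_{\text{g}} * \varsigma_{\text{g}} + \alpha \cdot \psi_{\nu} * \varsigma_{\nu}) - \mathcal{F}(\psi_{\text{g}} * \varsigma_{\text{g}}) \|_2^2$.

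Next I would use linearity of the short-time Fourier transform $\mathcal{F}$ to cancel the common clean term $\mathcal{F}(\psi_{\text{g}} * \varsigma_{\text{g}})$, leaving $\Delta(O, O^{\prime}) = \| \mathcal{F}(\alpha \cdot \psi_{\nu} * \varsigma_{\nu}) \|_2^2 = \alpha^2 \, \| \mathcal{F}(\psi_{\nu} * \varsigma_{\nu}) \|_2^2$. I would then define the projection $P$ that sends an observation $O^{\prime}$ to its frequency-domain intervention component, $P(O^{\prime}) := \mathcal{F}(\psi_{\text{g}} * \varsigma_{\text{g}} + \alpha \cdot \psi_{\nu} * \varsigma_{\nu}) - \mathcal{F}(\psi_{\text{g}} * \varsigma_{\text{g}})$, which is exactly the coordinate on which $\Delta$ is supported. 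By the two steps above we have the identity $\| P(O^{\prime}) \|_2^2 = \Delta(O, O^{\prime})$, so membership $O^{\prime} \in \mathcal{B}_{\epsilon}(O)$ is equivalent to $\| P(O^{\prime}) \|_2^2 \le \epsilon$. This places every projected observation inside a sphere of radius $\sqrt{\epsilon}$ (equivalently radius $\epsilon$ under the squared-norm convention used in the statement) centered at the projected clean observation $\mathcal{F}(\psi_{\text{g}} * \varsigma_{\text{g}})$, which is the claim of Theorem~\ref{theory: obsbound}. Assumption~\ref{theory: energynorm} enters to certify that $\epsilon$ is a finite, meaningful radius: combined with the convolution theorem it bounds $\alpha^2 \| \mathcal{F}(\psi_{\nu}) \cdot \mathcal{F}(\varsigma_{\nu}) \|_2^2$ in terms of the source energy $e$, so the attacker's feasible set is genuinely compact rather than vacuous.

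The main obstacle I anticipate is technical rather than conceptual, and it lies in the passage through the convolution theorem. Writing $\mathcal{F}(\psi_{\nu} * \varsigma_{\nu}) = \mathcal{F}(\psi_{\nu}) \cdot \mathcal{F}(\varsigma_{\nu})$ is exact for the full Fourier transform but only approximate for the \emph{short-time} transform, since windowing does not commute cleanly with time-domain convolution; I would either argue the theorem at the level of $\mathcal{F}$ as a linear isometry (so that the projection identity $\| P(O^{\prime}) \|_2^2 = \Delta(O, O^{\prime})$ holds without invoking the product form at all) or restrict to the regime where the window is long relative to the impulse response. To convert the a~posteriori constraint $\| P(O^{\prime}) \|_2^2 \le \epsilon$ into an a~priori bound expressed through $e$ and $\alpha$, one additionally needs a uniform bound on the room-impulse-response gain $\| \mathcal{F}(\psi_{\nu}) \|_{\infty}$; I would flag this as the one place where an extra modeling assumption (bounded acoustic transfer) is implicitly required.
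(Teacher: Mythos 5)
Your proposal takes essentially the same route as the paper's proof: kill the visual term via Property~\ref{theory: visualassump}, cancel the clean-source term by linearity of $\mathcal{F}$, factor the attacker's term by the convolution theorem, and bound the result with Assumption~\ref{theory: energynorm}. The one step you leave open --- the need for a uniform bound on the room-impulse-response gain --- is exactly the step the paper closes with its appendix property (appx.~\ref{theory: ft}): the RIR is modeled as a Dirac delta, so $\mathcal{F}(\psi_{\nu}) = \nicefrac{1}{2\pi}$ identically, which is the strongest possible form of the ``bounded acoustic transfer'' assumption you flagged, and it converts your conditional bound into the paper's explicit radius $\epsilon = \nicefrac{\alpha e}{2\pi}$. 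Beyond that, your version is in two respects more careful than the paper's own chain of equalities: you write the correct factor $\alpha^{2}$ when pulling the scalar out of the squared norm, whereas the paper writes $\alpha$ (and likewise it keeps $\nicefrac{1}{2\pi}$ where squaring should give $\nicefrac{1}{(2\pi)^{2}}$); and your worry that the convolution theorem is only approximate for the \emph{short-time} transform is legitimate --- the paper simply applies it as if $\mathcal{F}$ were the full Fourier transform. One caution: since $\mathcal{B}_{\epsilon}(O)$ is \emph{defined} by the constraint $\Delta(O,O^{\prime}) \le \epsilon$, your ``membership equivalence'' step is tautological and carries no content; the substantive claim of Theorem~\ref{theory: obsbound} is the a~priori estimate that \emph{every} admissible attack satisfies $\Delta(O,O^{\prime}) \le \nicefrac{\alpha e}{2\pi}$, which is the bound you do derive (modulo the RIR gain) in your closing paragraph, so your proof is complete once the delta-RIR modeling assumption is granted.
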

\begin{proof}
\begin{equation}\label{bw}
	\begin{aligned}
		\Delta(O, O^{\prime}) &=  \| I_{\text{g}} - I_{\nu} \|^{2}_{2} + \| \mathcal{F}(\psi_{\text{g}} * \varsigma_{\text{g}} + \alpha \cdot \psi_{\nu} * \varsigma_{\nu}) - \mathcal{F}(\psi_{\text{g}} * \varsigma_{\text{g}})  \|^{2}_{2}   &( Def. ~\ref{theory: distancedef})  \\
		&=   \| \mathcal{F}(\psi_{\text{g}} * \varsigma_{\text{g}}) + \mathcal{F}(\alpha \cdot \psi_{\nu} * \varsigma_{\nu})- \mathcal{F}(\psi_{\text{g}} * \varsigma_{\text{g}})  \|^{2}_{2}   & ( Prop. ~\ref{theory: visualassump}, Prop. ~\ref{theory: sftlinear} ) \\
		&=   \alpha \cdot \|  \mathcal{F}( \psi_{\nu} * \varsigma_{\nu}) \|^{2}_{2} =   \alpha \cdot \|  \mathcal{F}(\psi_{\nu}) \cdot \mathcal{F}(\varsigma_{\nu})) \|^{2}_{2}  & ( Prop. ~\ref{theory: sftlinear} ) \\
		&=   \nicefrac{\alpha}{2\pi} \cdot \|  \mathcal{F}(\varsigma_{\nu}) \|^{2}_{2}  & ( Prop. ~\ref{theory: pulsesft}) \\
		&\le \nicefrac{\alpha \cdot e}{2\pi} = \epsilon & ( Assump. ~\ref{theory: energynorm} )
	\end{aligned}
\end{equation}
\end{proof}
Theorem ~\ref{theory: obsbound} shows that the attack is bounded and guarantees the rationality of our algorithm.
Property ~\ref{theory: pulsesft} in appx.~\ref{theory: ft} provides the linear property of the Fourier transform, the convolution theory of the Fourier transform, and the Fourier transform of the Dirac delta function.
\subsection{Sound Adversarial Audio-Visual Navigation}\label{sec:environment}
We propose \textbf{S}ound \textbf{A}dversarial \textbf{A}udio-\textbf{V}isual \textbf{N}avigation (SAAVN), a novel model for the audio-visual embodied navigation task. 
Our approach is composed of three main modules (Fig. \ref{fig: model}). 
Given visual and audio inputs, our model 1) encodes these cues and make a decision for the motion of the agent, then 2) encodes these cues and decide how to act for the sound attacker to make an acoustically complex environment, and finally 3) make a judgment for the agent and the attacker and to optimization.
The agent and the attacker repeat this process until the agent has been reached and executes the Stop action.

\textbf{Environment.} \label{sec: environment}
Our work is based on the SoundSpaces~\citep{avn} platform and Habitat simulator~\citep{habitat-sim} and with the publicly available datasets: Replica~\citep{replica} and Matterport3D~\citep{matterport3d} and SoundSpaces audio dataset.
In SoundSpaces, the sound is created by convolving the selected audio with the corresponding binaural room impulse responses (RIRs) under one of the directions. 
When a sound attacker emits a chosen sound from its position, the emitted omnidirectional audio is convolved with the corresponding binaural RIR to generate a binaural response from the environment heard by the agent when facing each direction.
In this sense, the attacker's sound also considers the reflections on the surface of objects in the environment, making it physically admissible and realistic.
The agent's reward is based on how close the robot is away from the goal and whether it succeeds in reaching it.  
The setting is the same as of the SoundSpaces.  
The action space of the agent is navigation motions, which is the same as the setting of the SoundSpaces. 
An environment attacker embodied in the environment must take actions from a hybrid action space $\mathcal{A}^{\nu}\,$.
For brevity, the abbreviation of superscripts position, volume, and category are set to pos, vol, and cat, respectively. 
The hybrid action space is the Cartesian product of navigation motions space $\mathcal{A}^{\nu,\text{pos}}$,
volume of sound space $\mathcal{A}^{\nu,\text{vol}}$ and category of sound space $\mathcal{A}^{\nu,\text{cat}}$:
$\mathcal{A}^{\nu} =\mathcal{A}^{\nu,\text{pos}} \times \mathcal{A}^{\nu,\text{vol}} \times \mathcal{A}^{\nu,\text{cat}}\,$.
For more details, see appx.~\ref{appendix:environment}. 

\textbf{Perception, act, and optimization.} \label{sec: modelImplementation}
Our model uses acoustic and visual cues in the 3D environment for efficient navigation. 
Our model has mainly comprised of three parts: the environment attacker, the agent, and the optimizer (See Fig.~\ref{fig: model}). 
At every time step $t$, the agent and the attacker receives an observation $O_{t}=(I_{t},\,B_{t})$, where $I$ is the egocentric visual observation consisting of an RGB and a depth image; $B$ is the received binaural audio waveform represented as a two-channel spectrogram.
Our model encodes each visual and audio observation with a CNN, respectively, where the output of each CNN are visual vector $f_{I1}(I_t)$ and audio vector $f_{B1}(B_t)$.
Then, we concatenate the two vectors to obtain observation embedding representation $e^{1}=[f_{I1}(I_t),\,f_{B1}(B_t)]$.
We transform observation embedding representation to calculate state representation by a gated recurrent unit (GRU), $s^{1}_{t} = \emph{GRU}(e^{1}_t,\,h^{1}_{t-1})$.
An actor-critic network uses $s^{1}_{t}$ to predict the action distribution $\pi^{\omega}_\theta(a^{\omega}_t|s^{1}_{t},\, h^{1}_{t-1})$ and value of the state $V^{\omega}_\theta(s^{1}_{t},\, h^{1}_{t-1})$. 
We also encode visual and audio observation with a CNN for environment attacker, where the output of each CNN are vectors $f_{I2}(I_t)\,$, $f_{B2}(B_t)$.
We then concatenate the two vectors to obtain observation embedding representation $e^{2}=[f_{I2}(I_t),\,f_{B2}(B_t)]$.
We also transform observation embedding representation to calculate state representation by a GRU, $s^{2}_{t} = \emph{GRU}(e^{2}_t,\,h^{2}_{t-1})$ .
Three actor-critic networks use $s^{2}_{t}$ to predict the action distribution: $\pi^{\nu,\,\text{pos}}_\theta(a^{\nu,\,\text{pos}}_t|s^{2}_{t},\, h^{2}_{t-1})\,,\,\pi^{\nu,\,\text{vol}}_\theta(a^{\nu,\,\text{vol}}_t|s^{2}_{t}, h^{2}_{t-1})\,,\,\pi^{\nu,\,\text{cat}}_\theta(a^{\nu,\,\text{cat}}_t|s^{2}_{t},\, h^{2}_{t-1})$ and value of the state: $V^{\nu,\,\text{pos}}_\theta(s^{2}_{t},\, h^{2}_{t-1})\,,\,V^{\nu,\,\text{vol}}_\theta(s^{2}_{t},\, h^{2}_{t-1})\,,\,V^{\nu,\,\text{cat}}_\theta(s^{2}_{t},\, h^{2}_{t-1})$ .  
All actors and critics are modeled by a single linear layer neural network, respectively.
Finally, four action samplers sample the next action $a^{\omega}_t,\,a^{\nu,\,\text{pos}}_{t},\,a^{\nu,\,\text{vol}}_{t},\,a^{\nu,\,\text{cat}}_{t}$ from these action distributions generated by AgentActor, PositionActor, VolumeActor and CategoryActor respectively, determining the agent's next motion in the 3D scene.
The total critic is a linear sum of PositionCritic, VolumeCritic, and CategoryCritic (For weights details, see appx.~\ref{appendix:parameters} ).
The agent and the environment attacker optimize their expected discounted, cumulative rewards $\emph{G}(\pi^{\omega}, r)$ and $\emph{G}(\pi^{\nu}, r)$ respectively.
The loss of each branch actor-critic network and the total loss of our model as Equation (\ref{equ:loss}).
\begin{equation} \label{equ:loss}
\begin{split}
\mathcal{L}^{j} &= \sum0.5 \cdot (\hat{V}_{\theta^{j}}(s) - V^{j}(s))^{2} -\sum[\hat{A}^{j} \log(\pi_{\theta^{j}}(a \mid s)) + \beta \cdot H(\pi_{\theta^{j}}(a \mid s))]  \\
\mathcal{L}^{\nu} &= \nicefrac{1}{3} \cdot (\mathcal{L}^{\nu,\text{cat}} +\mathcal{L}^{\nu,\text{vol}} +\mathcal{L}^{\nu,\text{pos}} ) \\
\mathcal{L} &= \nicefrac{1}{6} \cdot \mathcal{L}^{\nu,\text{cat}} + \nicefrac{1}{6}  \cdot \mathcal{L}^{\nu,\text{vol}} + \nicefrac{1}{6}  \cdot \mathcal{L}^{\nu,\text{pos}} + \nicefrac{1}{2} \cdot \mathcal{L}^{\omega} 
\end{split}
\end{equation}
where $j \in \{ (\nu,\,\text{cat}),\,(\nu,\,\text{vol}),\,(\nu,\,\text{pos}),\,(\omega)\}$ . 
\hspace{2pt}$\hat{V}_{\theta^{j}}(s)$ is estimated state value of the target network for $j$ .
\hspace{2pt}$\emph{V}^{j}(s) = \max\limits_{a \in \mathbb{A}^{j}}\mathbb{E}[r_{t}+\gamma \cdot \emph{V}^{j}(s_{t+1}) \mid s_{t} = s]$ . 
\hspace{2pt}$\hat{A}_{t}^{j} = \sum_{i=t}^{T-1}\gamma^{i+2-t} \cdot \delta_{i}^{j}$ is the advantage for a given length-T trajectory and $\delta_{t}^{j} = r_{t}+\gamma \cdot \emph{V}^{j}(s_{t+1})-\emph{V}^{j}(s_{t})$ . 
We optimize the objective follows from \textit{Proximal Policy Optimization} (PPO)~\citep{ppo} . 

\begin{figure}[t!]
    \vspace{-0.36in}
  \centering
  \includegraphics[width=0.95\linewidth,scale=0.99]{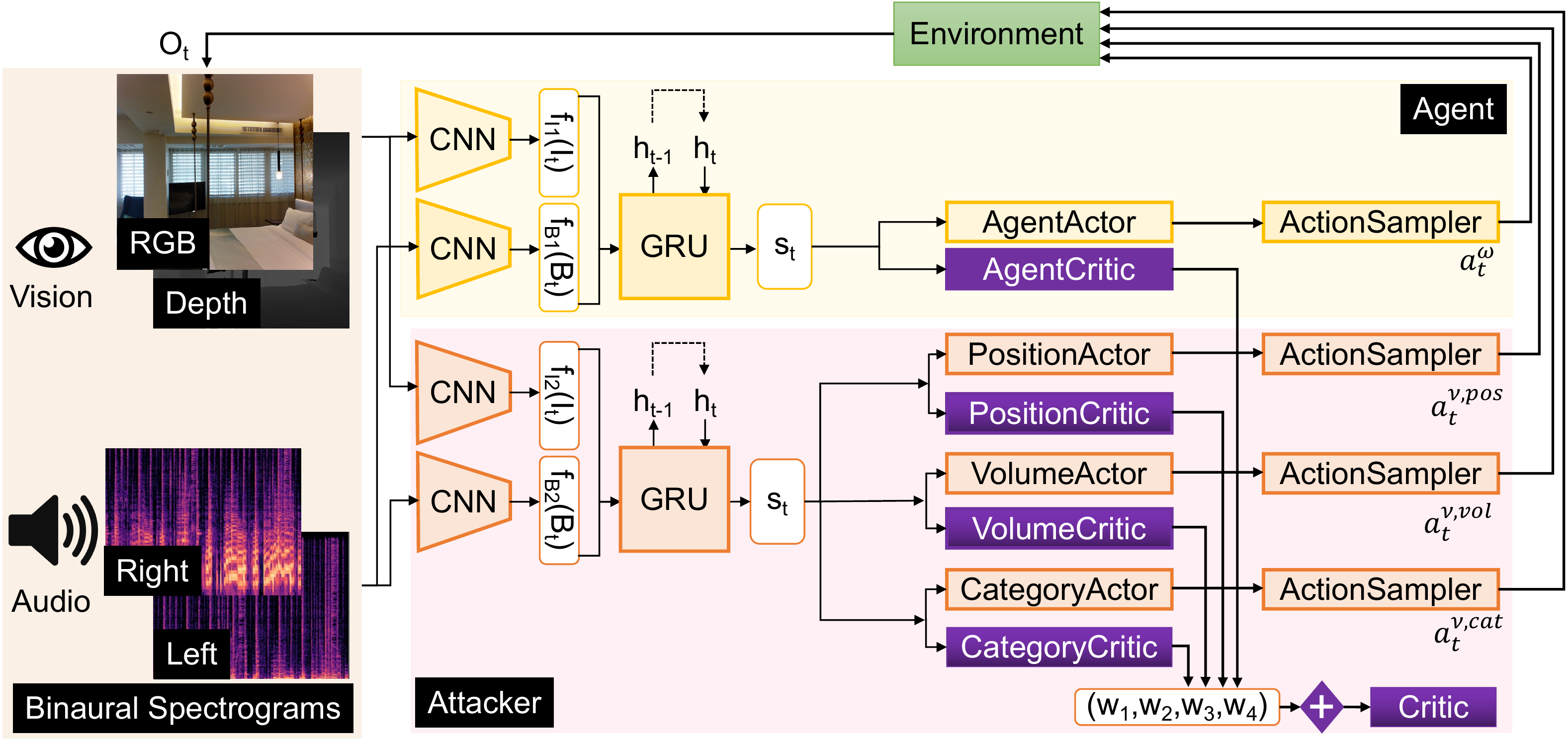}
  \caption{Sound adversarial audio-visual navigation network. 
  The agent and the sound attacker first encode observations and learn state representation $s_t$ respectively. 
  Then, $s_t$ are fed to actor-critic networks, which predicts the next action $a_t^{\omega}$ and $a_t^{\nu}$. 
  Both the agent and the sound attacker receive their rewards from the environment.
  The sum of their rewards are zero.}
  \label{fig: model}
  \vspace{-0.1in}
\end{figure}
\textbf{Algorithms.}
Our algorithms are as the following:
\begin{algorithm}\label{algorithm}
\caption{\textbf{S}ound \textbf{A}dversarial \textbf{A}udio-\textbf{V}isual \textbf{N}avigation}
\KwData{
Environment $\mathcal{E}$,
stochastic policies $\omega$ and $\nu$,
initial parameters $\theta^{\omega}_{0}$ for $\omega$ and $\theta^{\nu}_{0}$ for $\nu$,
number of updates $N_\text{iter}$, 
$N$.
}
\KwResult{
$\theta^{\omega}_{N_\text{iter}},\,\, \theta^{\nu}_{N_\text{iter}}$
}
\For {$i$=1, 2, ... $N_\text{iter}$}
{
// Run policy $\pi_{\theta^{\omega}_{i-1}}$ and $\pi_{\theta^{\nu}_{i-1}}$ in environment for $N$ episodes $T$ time steps  \;
$\{(o_{t,i},\,h_{t-1,i},\,a^{\omega}_{t,i},\,a^{\nu}_{t,i},\,r^{\omega}_{t,i},\,r^{\nu}_{t,i})\} \leftarrow \text{roll}(\mathcal{E}, \pi_{\theta^{\omega}_{i-1}}, \pi_{\theta^{\nu}_{i-1}}, T)$ \;
Compute advantage estimates $\hat{A}^{\omega}_{1},\cdots,\hat{A}^{\omega}_{T},\hat{A}^{\nu}_{1},\cdots,\hat{A}^{\nu}_{T}$ \;
// Optimize $\mathcal{L}(Equation ~\ref{equ:loss})$ w.r.t. $\theta^{\omega}$ and $\theta^{\nu}$ \;
$\theta^{\omega}_{i}, \theta^{\nu}_{i} \leftarrow \text{policy optimize with PPO} $ \;  
}
\end{algorithm}
\vspace{-0.16in}
\section{Experiments }\label{sec: experiments}
\paragraph{\textbf{Task. }}
We use AudioGoal navigation~\citep{avn} tasks to test the acoustically complex environment we designed.
In this task, the robot moves in a 3D environment. 
At each time step $ t $, it obtains an RGB and a depth image from its camera, and the binaural microphone receives sensor observations to get $ O_t $.
As the robot starts navigating, it does not know the scene map; 
it must accumulate observations to understand the geometry of the scene during navigation.
The robot should use the sound from the audio source to locate successfully and navigate to the target. 
\vspace{-0.1in}
\paragraph{\textbf{Baselines. }} \label{sec: baselines}
We compare our model to the following baselines and existing works:
\begin{enumerate}[leftmargin=*,itemsep=0pt]
    \item \textbf{Random}: A random policy that uniformly samples one of three actions and executes \textit{Stop} automatically when it reaches the goal (perfect stopping). 
    \item \textbf{AVN~\citep{avn}}: An audiovisual embodied navigation trained in an environment without sound intervention. 
    \item \textbf{SA-MDP~\citep{sa_mdp0}}: A model aims to improve the robustness by state adversarial. 
    We adopt its idea but only intervene state of the sound input and not process the visual information.
\end{enumerate}
\begin{minipage}[c]{0.35\textwidth}
\centering
\captionof{table}{\small{Comparison of different models in the clean environment under SPL and $R_{\text{mean}}$ metrics. Clean env or cEnv is the abbreviation of clean environment, Acou com env or acEnv is the abbreviation of acoustically complex environment later.}}
\resizebox{1\linewidth}{!}{
\begin{tabular}{ll}
\hline
Method                  & \multicolumn{1}{c}{Clean env.}  \\ 
\hline
Random                  & 0.000/-4.7                            \\
AVN~\citep{avn}      & 0.721/15.1                            \\
SA-MDP~\citep{sa_mdp0}   & 0.590/10.2                            \\
SAAVN (ours)           & \textbf{0.742}/\textbf{16.6}          \\
\hline
\end{tabular}
}
\\ 
\small{Note for Table 2:
The row  in Table~\ref{tab:complex} corresponds to how the policy is trained while the column corresponds to how the policy is tested.
The environment in the same column of a sub-table is the same.
The method in the same row of Table~\ref{tab:complex} is the same. 
}

\label{tab:simple}
\end{minipage}
\begin{minipage}[c]{0.62\textwidth}
\centering
\captionof{table}{\small{Comparison of different models in the environment with a specified sound attacker $a^{\nu}$ under SPL and $R_{\text{mean}}$ metrics.
}}
\resizebox{0.9\linewidth}{!}{
\begin{tabular}{llll} 
\hline
Method                  & \multicolumn{3}{c}{Environment with sound attacker $a^{\nu}$}                                               \\ 
\hline
                        & $P_{\text{fix}}$               & $P_{\text{random}}$            & $\normalfont\textsc{P}$               \\ 
\hline
Random                  & 0.000/-4.5                   & 0.000/-4.8                   & 0.000/-4.5                    \\
AVN~\citep{avn}    & 0.676/13.9                   & 0.605/11.0                   & 0.609/11.0                    \\
SA-MDP~\citep{sa_mdp0} & 0.456/7.9                    & 0.272/5.2                    & 0.291/5.4                     \\
SAAVN:P(ours)           & \textbf{0.738}/\textbf{16.6} & \textbf{0.659}/\textbf{13.2} & \textbf{0.657}/\textbf{13.2}  \\ 
\hline
                        & $V_{\text{fix}}$               & $V_{\text{random}}$            & $\normalfont\textsc{V}$               \\ 
\hline
Random                  & 0.000/-4.7                   & 0.000/-4.5                 & 0.000/-4.5                \\
AVN~\citep{avn}      & 0.666/13.5                   & 0.535/10.9                   & 0.549/11.0                    \\
SA-MDP~\citep{sa_mdp0}   & 0.317/5.2                    & 0.202/3.8                    & 0.209/3.9                     \\
SAAVN:V(ours)           & \textbf{0.727}/\textbf{16.0} & \textbf{0.647}/\textbf{13.6} & \textbf{0.648}/\textbf{13.6}  \\ 
\hline
                        & $C_{\text{fix}}$               & $C_{\text{random}}$            & $\normalfont\textsc{C}$               \\ 
\hline
Random                  & 0.000/-4.5                   & 0.000/-4.7                   & 0.000/-4.5                    \\
AVN~\citep{avn}      & 0.443/9.3                    & 0.569/10.4                   & 0.576/10.5                    \\
SA-MDP~\citep{sa_mdp0} & 0.293/4.7                      & 0.451/8.0                   & 0.461/8.2                     \\
SAAVN:C(ours)           & \textbf{0.666}/\textbf{14.4} & \textbf{0.600}/\textbf{12.5} & \textbf{0.609}/\textbf{12.7}  \\ 
\hline
                        & $(PVC)_{\text{fix}}$         & $(PVC)_{\text{random}}$      & $\normalfont\textsc{PVC}$         \\ 
\hline
Random                  & 0.000/-4.5                   & 0.000/-4.7                   & 0.000/-4.5                    \\
AVN~\citep{avn}     & 0.375/9.5                   & 0.388/8.2                    & 0.389/8.0                     \\
SA-MDP~\citep{sa_mdp0}  & 0.283/4.5                    & 0.375/7.2                    & 0.368/7.2                     \\
SAAVN:PVC(ours)         & \textbf{0.667}/\textbf{14.9} & \textbf{0.557}/\textbf{10.6} & \textbf{0.552}/\textbf{10.6}  \\
\hline
\end{tabular}
}
\label{tab:complex}
\end{minipage}
\paragraph{\textbf{Metrics and symbols. }} \label{sec: metric}
The evaluations are compared by the following navigation metrics: 
1) success weighted by inverse path length (SPL): the standard metric~\citep{metric_SPL} that weighs successes by their adherence to the shortest path;
2) Agent's average episode reward: $R_{\text{mean}}$. 
The detailed definitions of the symbols used later in this section are as follows:
(1) $\pi^{\nu,\text{pos}}$, $\pi^{\nu,\text{vol}}$ and $\pi^{\nu,\text{cat}}$ are abbreviated as $\normalfont\textsc{P}$, $\normalfont\textsc{V}$ and $\normalfont\textsc{C}$ respectively.
(2) $X$, $X_{\text{random}}$ and $X_{\text{fix}}$ denote policies, where $X \in \{P,V,C\}$ is a policy to learn, $X_{\text{random}}$ is a random policy, and when $X \in \{P\}$, $X_{\text{fix}}$ is a determined policy which acts a constant action after random initialization in an episode, while $X \in \{V,C\}$, $X_{\text{fix}}$ is a determined policy which act a constant action set in advance in all episodes.
(3) $X,Y_{\text{fix}}$ is abbreviated as $X$, where $X$ is a policy, and $Y \in \{\{P,V,C\}\setminus \{X\}\}$ are set to a determined policy $Y_{\text{fix}}$.
(4) $\normalfont\textsc{SAAVN:X}$ stands for a model variant, where $X$ is a policy for $\pi^{\nu}$ and $X \in \{P,V,C\}$.
(5) ${\normalfont\textsc{SAAVN:X,Y=0.1}}$ is a model variant named ${\normalfont\textsc{SAAVN:X}}$, and $Y_{\text{fix}}$ is set to 0.1, where $X \in \{P,V,C\}, Y \in \{ \{P,V,C\}\setminus{\{X\}} \}$.
(6) ${\normalfont\textsc{SAAVN:X,Y=0.1,Z=}}$person6 is a model variant named ${\normalfont\textsc{SAAVN:X}}$, where $Y_{\text{fix}}$ is set to 0.1, $Z_{\text{fix}}$ is set to person6, $X \in \{P,V,C\}, Y \in \{ \{P,V,C\}\setminus{\{X\}} \},Z \in \{ \{P,V,C\}\setminus{\{X,Y\}} \}$.
See appx.~\ref{appendix:moreExp} and  ~\ref{appendix:metrics} for more details. 
\vspace{-0.06in}
\paragraph{\textbf{Navigation results. }} \label{sec: exp}
We have built a large number of complex environments with sound attackers therein. 
The baselines and our SAAVN have been tested several times with different seeds in all these acoustically complex and clean environments to obtain the average navigation ability under each environment. 
It can be seen from Table ~\ref{tab:simple}, ~\ref{tab:complex} that our model achieves the best navigation capabilities in all environments.
For more experiments results, see appx.~\ref{appendix:moreExp}.
\begin{figure}[t!]
\vspace{-0.36in}
    \centering
    \includegraphics[width=0.75\linewidth,scale=0.6]{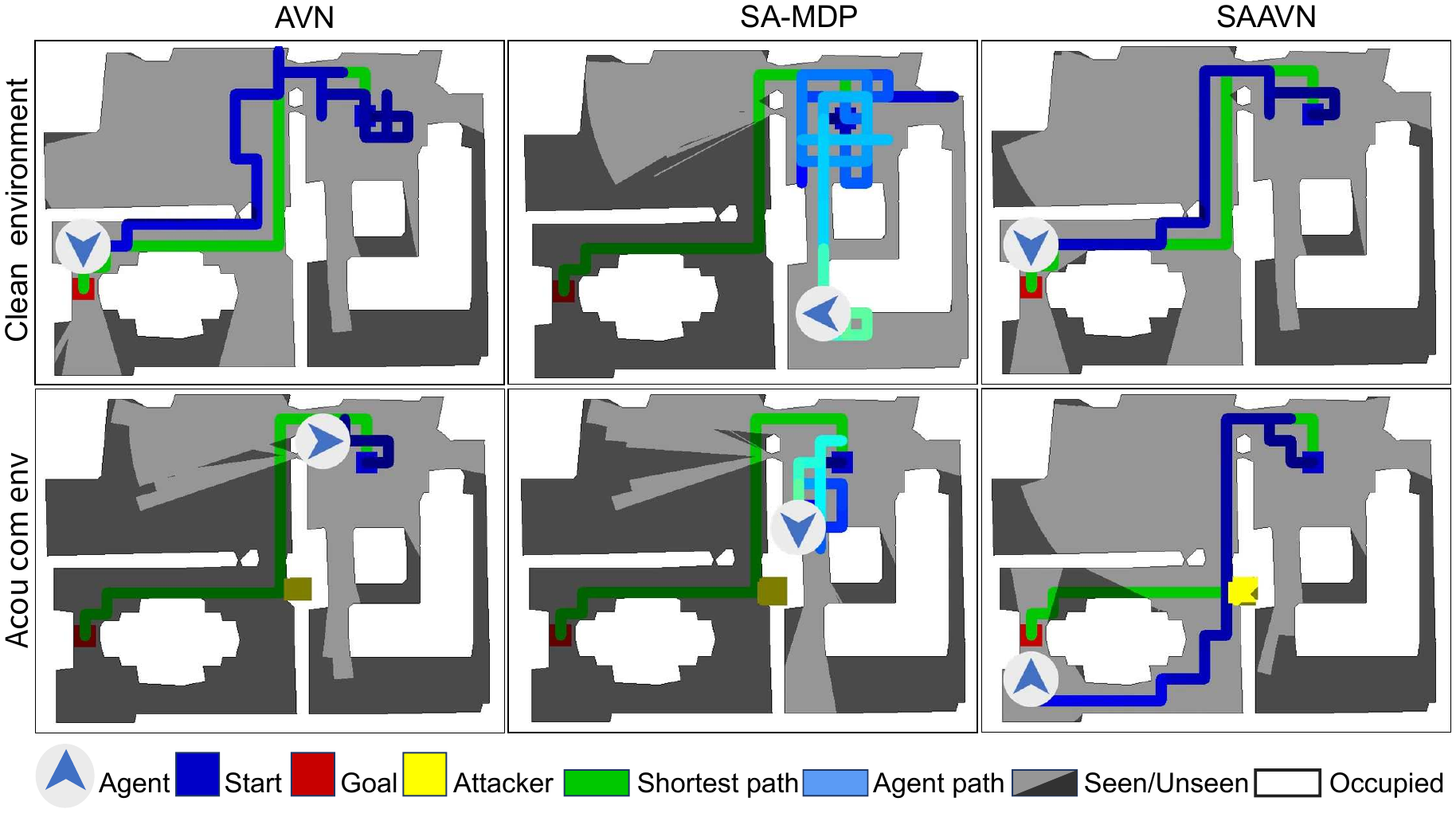}
    \caption{Different models in different environments explore trajectories.  
    The first row in the figure is a clean environment, and the second line is an acoustically complex environment. 
    Acou com env stands for acoustically complex environment.
}
    \label{fig:trajmain}
    \vspace{-0.2in}
\end{figure}
\vspace{-0.1in}
\paragraph{\textbf{Trajectory comparisons. }} \label{sec: traj}
Fig.~\ref{fig:trajmain} shows the test episodes for our SAAVN model. 
The environment of each row in the figure is consistent, and the model of each column is too. 
SA-MDP failed to complete the task successfully in all 2 environments. 
AVN can complete the task in a clean environment but fails in an acoustically complex environment.
SAAVN completed the tasks in all 2 environments. 
What is more, the navigation track of SAAVN in an environment without intervention is shorter than that of AVN. 
These fully demonstrate the navigation capabilities of SAAVN.
For more trajectory comparisons, see appx.~\ref{appendix:moreTraj}.

\paragraph{\textbf{Robustness of the model. }}
In order to verify the robustness of our algorithm, we designed 5 sound attackers, with the settings as follows: $\pi^{\nu, \text{pos}}$ is a learned policy, $\pi^{\nu, \text{cat}}$ is set to person6, while $\pi^{\nu, \text{vol}}$ is fixed to 0.1, 0.3, 0.5, 0.7 and 0.9 respectively. 
We train AVN, SA-MDP, and SAAVN  named $\normalfont\textsc{SAAVN: P, V=0.1, C=}$person6 respectively in the same environment created by a designated sound attacker. 
Then we test them multiple times in the above five environments. 
Fig.~\ref{fig:attackStrength} shows the performance of different models under different sound attacks. 
We found that these sound attackers have certain attack capabilities against these three algorithms. 
However, it is found that our method's performance decreases more slowly, which fully demonstrates that our method helps to improve the robust performance of the algorithm.
\begin{figure}[!htbp]
\centering
\begin{minipage}[t]{0.495\textwidth}
\centering
\includegraphics[width=7cm]{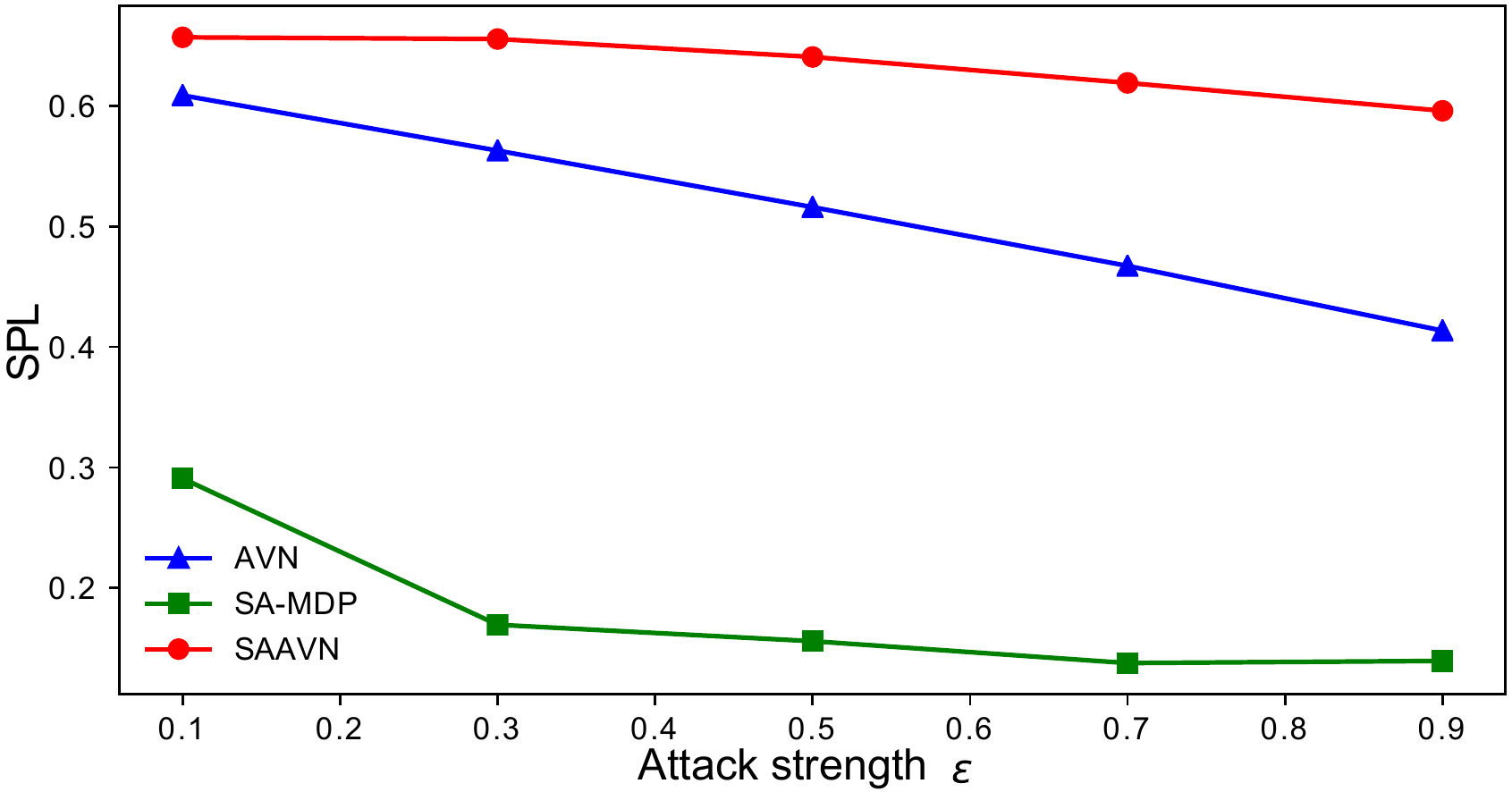}
\caption{Navigation capabilities under different sound attack strengths.}
\label{fig:attackStrength}
\end{minipage}
\begin{minipage}[t]{0.48\textwidth}
\centering
\includegraphics[width=4.1cm]{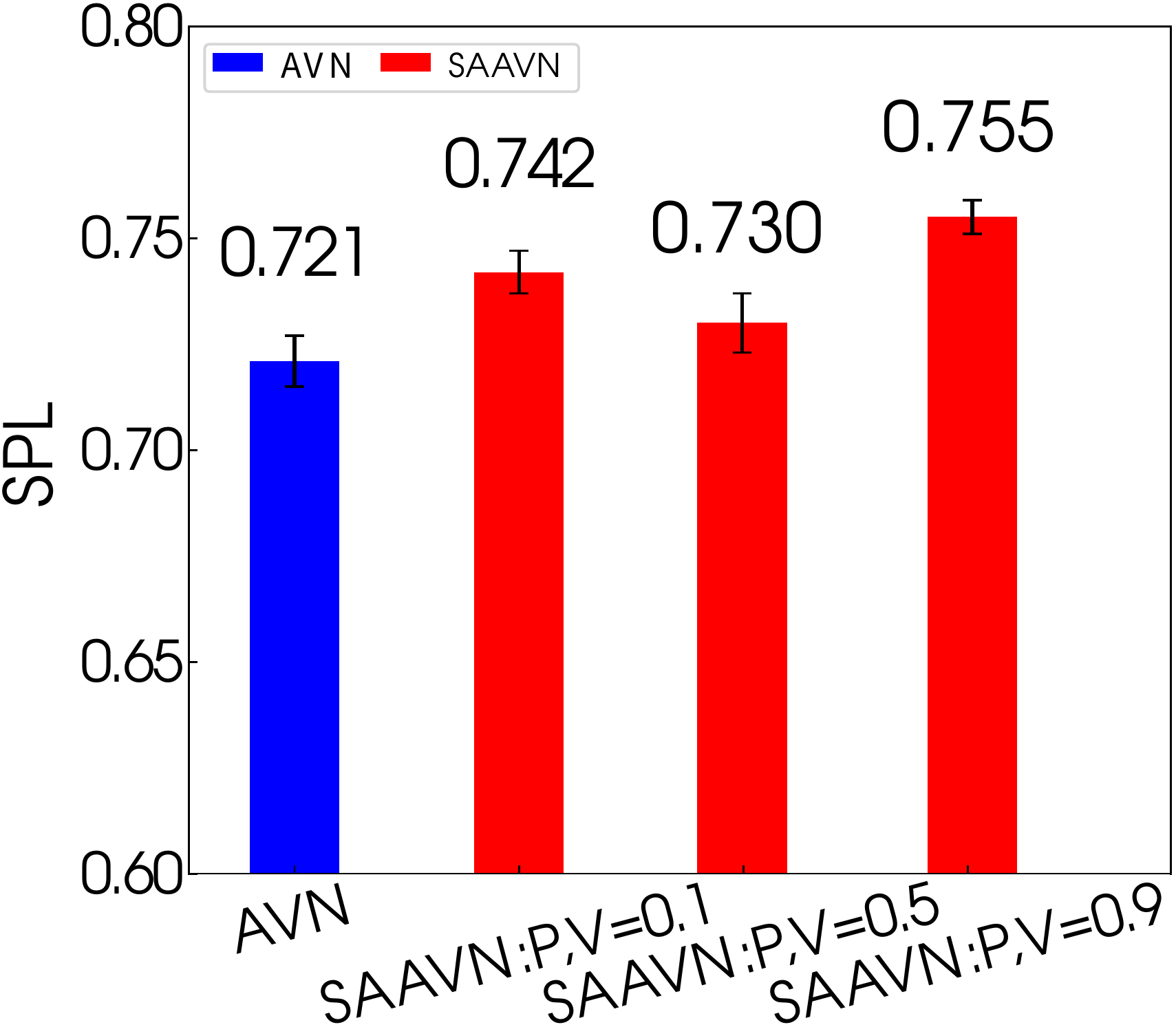}
\caption{Performance affect by volume.}
\label{fig: volume}
\end{minipage}
\end{figure}
\begin{figure}[t!]
    \vspace{-0.36in}
    \centering
    \includegraphics[width=0.83\linewidth,scale=0.8]{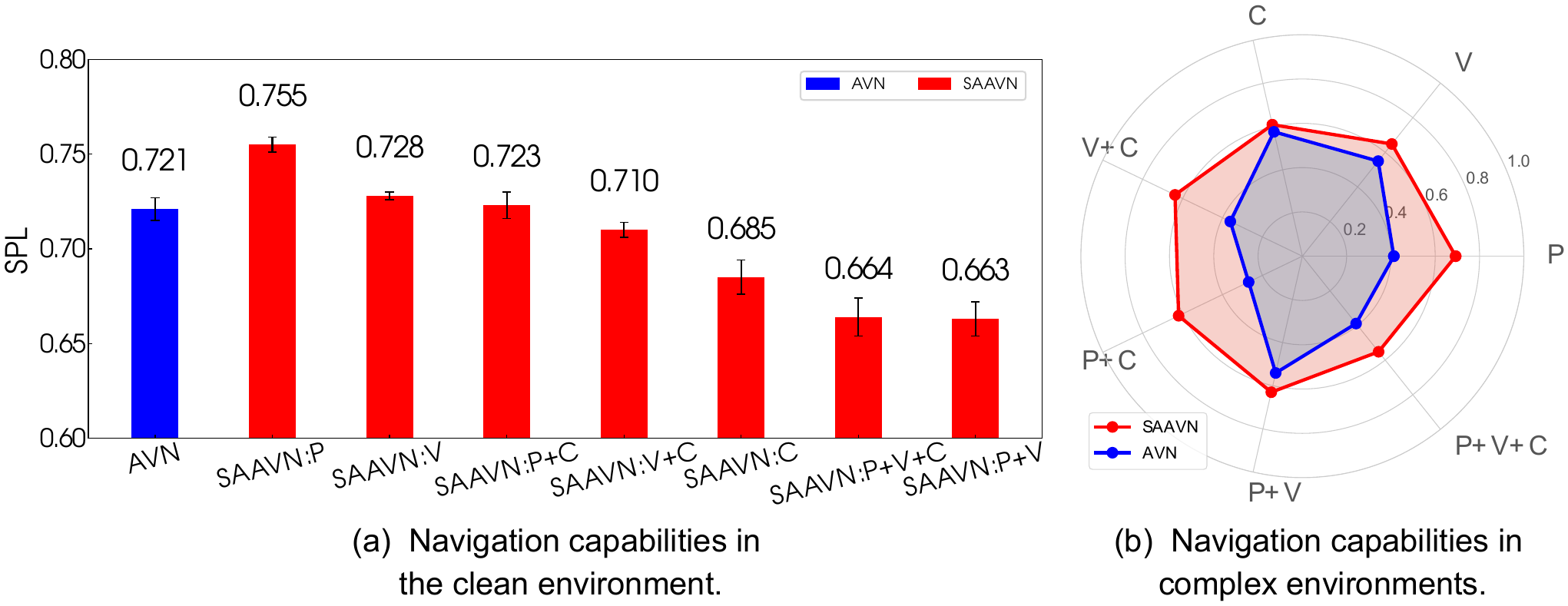}
    \caption{Navigation capabilities in different environments.}
    \label{fig:env}
    \vspace{-0.1in}
\end{figure}
\vspace{-0.2in}
\paragraph{\textbf{Is the louder the sound attacker's volume, the better? }}
To verify the navigation ability of an agent that has grown in an acoustically complex environment in a clean environment, we first train our model in the acoustically complex environments where the sound attacker exists and then test the navigation ability of the model by migrating to the environment which the sound attacker is removed.
In Fig.~\ref{fig: volume}, $\normalfont\textsc{SAAVN\,: P, V=0.1}$ is short for a model variant named $\normalfont\textsc{SAAVN\,: P,\, V=0.1, \,C=}$person6. Others are similar.
It can be seen from Fig.~\ref{fig: volume} that the navigation ability of the attacker is excellent when the volume of the attacker is 0.1 and 0.9, but not very good when the volume is 0.5. 
The ablation experiments show that simply increasing the sound volume of the attacker does not necessarily lead to better performance. 
The relationship between the navigation capacity and the volume of the sound attacker is not straightforward and depends on other factors, including the position and sound category. 
It hence supports the necessity of our method to make the volume policy of the attacker learnable. 
As such, agents can learn better navigation skills in an acoustically intervened environment. \\
\vspace{-0.2in}
\paragraph{\textbf{Navigation results in the clean environment. }}
When the sound attacker does not exist, what is the navigation ability of an agent that grows in an acoustically complex environment? 
The performance of AVN is trained in a simple environment and tested in a clean environment; 
The performances of SAAVN are trained in acoustically complex environments and tested in a clean environment. 
It can be seen from Fig.~\ref{fig:env} (a) that the ability of an agent that grows up in an acoustically complex environment to navigate in a clean environment depends on the complexity of the environment. 
If the environment is acoustically complex, but within a certain range, the ability of the agent will increase; 
if the environment is too complicated and the changes are relatively great, the navigation ability of the agent is reduced a bit, but not too much. 
The ablation study shows that the environment should not be too complicated to achieve optimal navigation capabilities. 
\vspace{-0.1in}
\paragraph{\textbf{Navigation results in acoustically complex environments. }}
What is the navigation ability of an agent in an acoustically complex environment? 
In Fig.~\ref{fig:env} (b), we can see different sound attackers. 
As the attack intensity of sound attackers ascends, the navigation capabilities of both AVN~\citep{avn} and ours (SAAVN) decline. 
However, our method has a relatively small reduction in navigation capability and speed compared with AVN, showing better robustness and navigation capabilities. 
\vspace{-0.1in}
\paragraph{\textbf{Independent learning framework does not converge in training.  }} \label{sec: idlmain}
We have also designed a learning framework where the agent and the sound attacker learn independently but not convergent.
Our SAAVN employs the multi-agent learning mechanism to train the two policies simultaneously, whose benefit in convergence is empirically demonstrated by Fig.~\ref{fig:idl}.
\begin{figure}[!ht]
    \centering
    \includegraphics[width=0.95\linewidth,scale=0.6]{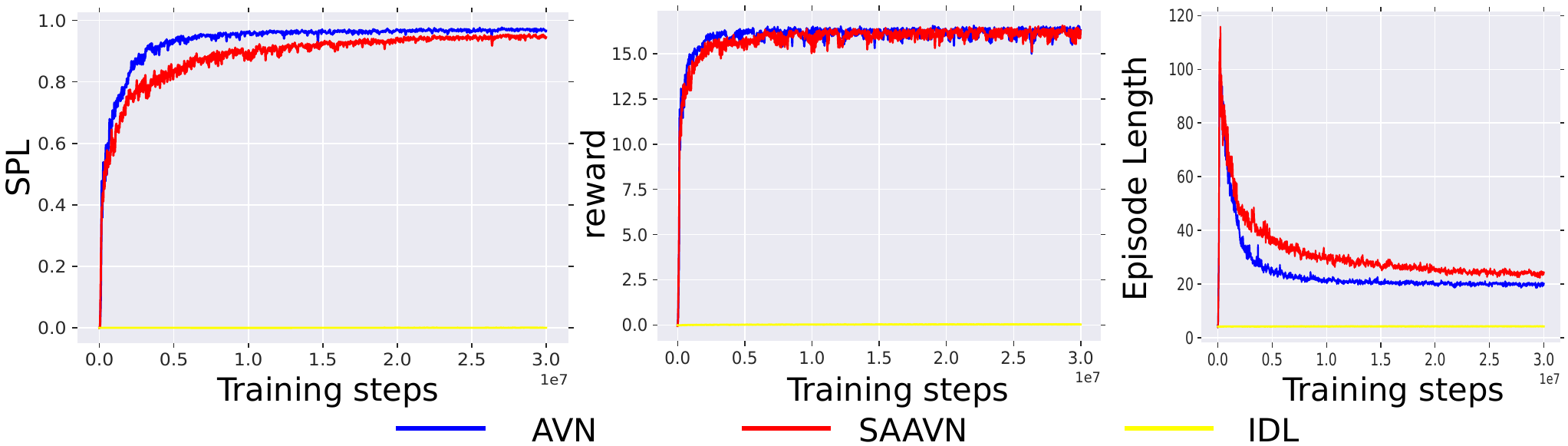}
    \caption{Training curve comparison between AVN, SAAVN, and IDL.}
    \label{fig:idl}
\end{figure}
\vspace{-0.1in}
\section{Conclusion}\label{sec:conclusion}
This paper proposes a game where an agent competes with a sound attacker in an acoustical intervention environment. 
We have designed various games of different complexity levels by changing the attack policy regarding the position, sound volume, and sound category. 
Interestingly, we find that the policy of an agent trained in acoustically complex environments can still perform promisingly in acoustically simple settings, but not vice versa. 
This observation necessitates our contribution in bridging the gap between audio-visual navigation research and its real-world applications. 
A complete set of ablation studies is also carried out to verify the optimal choice of our model design and training algorithm. 
However, the limitation of our work is that we only assume one sound attacker and have not studied the scenario with two and more attackers. 
Another limitation of our work is that our current evaluations are conducted in virtual environments. 
It will make better sense to assess our method on practical cases like a robot navigating in a real house. 
The above two will be left for future exploration.
Since our research is conducted on a simulation platform, it is unlikely to cause a negative social impact in the foreseeable future.
\newpage

\section{Reproducibility}
Our code is available at : appx.~\ref{appendix:codes}  and \url{https://github.com/yyf17/SAAVN/tree/main}.
The algorithm parameters are detailed in appx.~\ref{appendix:parameters}.

Please refer to appx.~\ref{appendix:Env} for more details about the acoustically clean or simple environment and acoustically complex environment.
For more detailed information about the Fourier transform properties, see appx.~\ref{theory: ft}.

The dataset Replica and Matterport3D is used in the experiment.
For basic information about the dataset, you can refer to appx.~\ref{appendix:codes} for more details.
Please refer to \url{https://github.com/yyf17/SAAVN/blob/main/dataset.md} for the detailed steps of downloading and processing the dataset.

\section{Ethics statement}
The research in this paper does NOT involve any human subject, and our dataset is not related to any issue of privacy and can be used publicly. All authors of this paper follow the ICLR Code of Ethics (https://iclr.cc/public/CodeOfEthics). 

\section*{Acknowledgement}
The following projects jointly supported this work:
the Sino-German Collaborative Research Project Crossmodal Learning (NSFC 62061136001/DFG TRR169);
Beijing Science and Technology Plan Project (No.Z191100008019008);
the National Natural Science Foundation of China (No.62006137);
Natural Science Project of Scientific Research Plan of Colleges and Universities in Xinjiang (No.XJEDU2021Y003).

\newpage
\bibliography{iclr2022_conference}
\bibliographystyle{iclr2022_conference}

\newpage
\section*{\LARGE Appendix}\label{sec:appendix}
\appendix

In this section we provide additional details about:
\begin{enumerate}[label=\Alph*]
    \item Definitions in details about the acoustically clean or simple environment and acoustically complex environment (As referenced in \textsection~\ref{sec: introduction} of the main paper. ). See Appendix ~\ref{appendix:Env} for more details.

    \item Properties of the Fourier transform (As referenced in \textsection~\ref{sec: theory}  of the main paper. ). See Appendix ~\ref{theory: ft} for more details.
    
    \item Details of environment-including SoundSpaces, rewards, and action space (As referenced in \textsection~\ref{sec:environment} of the main paper. ). See Appendix ~\ref{appendix:environment} for more details.
    
    \item Implementation in details of our model (As referenced in \textsection~\ref{sec: modelImplementation} of the main paper. ). See Appendix ~\ref{appendix:modeldetails} for more details.

    \item SoundSpaces dataset in detail (As referenced in \textsection~\ref{sec: environment} of the main paper. ). See Appendix ~\ref{appendix:dataset} for more details.
    
    \item Metrics in details(As referenced in \textsection~\ref{sec: metric} of the main paper. ). See Appendix ~\ref{appendix:metrics} for more details.
    
    \item Algorithm parameters in details (As referenced in \textsection~\ref{sec: modelImplementation} of the main paper. ). See Appendix ~\ref{appendix:parameters} for more details.
    
    \item More experimental results. It mainly includes comparative experiments in a clean environment and the acoustically complex environments on Replica and Matterport3D (As referenced in \textsection~\ref{sec: exp}  of the main paper. ). See Appendix ~\ref{appendix:moreExp} for more details.
    
    \item More trajectory examples. It mainly includes the trajectory examples in the clean environment and the acoustically complex environments on Replica and Matterport3D(As referenced in \textsection~\ref{sec: traj} of the main paper. ). See Appendix ~\ref{appendix:moreTraj} for more details.
    
    \item Independent learning framework does not converge in training (As referenced in \textsection~\ref{sec: idlmain} of the main paper. ). See Appendix ~\ref{appendix: idl} for more details.
    
    \item More discussion about our model SAAVN. See Appendix ~\ref{appendix: morediscussion} for more details.
    
    \item Code implementation details of our model (As referenced in \textsection~\ref{sec: modelImplementation} of the main paper. ). It mainly includes the PyTorch code implementation of the core part of our model. See Appendix ~\ref{appendix:codes} for more details.
    
    \item Video demonstrations on datasets Replica and Matterport3D. See Appendix ~\ref{appendix:videos} for more details.
    
    \item Broader Impact. See Appendix ~\ref{appendix:broaderImpact} for more details.

\end{enumerate}

\section{Definitions.} \label{appendix:Env}
\begin{definition} \label{def:simpleEnv}
\textbf{Acoustically clean or simple environment.} 
The acoustically clean or simple environment is described as follows: 
(1) The number of target sound sources is one.
(2) The position of the target sound source is fixed in an episode of a scene.
(3) The volume of the target sound source is the same in all episodes of all scenes, and there is no change.
\end{definition}
\begin{definition} \label{def:complexEnv}
\textbf{The acoustically complex environment.} 
The acoustically complex environment referred to in this study is defined as follows:
(1) There is one non-target sound source in the scene.
(2) The position of the non-target sound source is uncertain, which means that the position of the non-target sound source in the scene is arbitrary. (3) The volume of the non-target sound source is uncertain. It is only known that the maximum volume of the non-target sound is equal to the maximum volume of the target non-target sound source, but the specific volume is uncertain.
(4) The sound category of the non-target sound source is uncertain. 
Only the set to which the non-target sound category belongs is the same as the set to which the target sound source category belongs.
(5)The attacker has no physical entity. 
It is like an invisible ghost. 
It has no shape, no volume, and no mass. When the agent runs in front of it, it will not block its movement or collide with the agent. 
This assumption is to simplify the model.
These non-target sounds are a major obstacle to audio-visual embodied navigation, which greatly increases the search time.  \\
\end{definition}
\textbf{Properties of non-target sound sources } 
The non-target sound sources have their characteristics: 
(1) Although non-target and target sounds belong to the same spectrum range, these non-target sounds are not suitable for Gaussian use. 
Distribution is used for modeling, and noise is suitable for modeling with Gaussian distribution. 
It is difficult to model both the non-target low voice and the target's natural gas alarm sound. 
(2) The existence of these non-target sounds is a significant obstacle to audio-visual embodied navigation, which significantly increases the search time. 
(3) These non-target sounds may not move like the target, or they may explore around the scene.
Our work is based on the scope of the above definition.
\section{Properties of Fourier transform. } \label{theory: ft}
\begin{property} The following three properties of the Fourier transform need to be used when proving the Theorem ~\ref{theory: obsbound}. \\
1. Linearity \label{theory: sftlinear} : The Fourier transform of sum of two or more functions is the sum of the Fourier transforms of the functions. $\mathcal{F}(\va +\vb) = \mathcal{F}(\va) + \mathcal{F}(\vb) $. If we multiply a function by a constant, the Fourier transform of the resultant function is multiplied by the same constant. $\mathcal{F}(k \cdot \va) = k \cdot \mathcal{F}(\va) $.  \\
2. Convolution \label{theory: sftconv} : The Fourier transform of a convolution of two functions is the point-wise product of their respective Fourier transforms. $\mathcal{F}(f * g) = \mathcal{F}(f) \cdot \mathcal{F}(g) $. \\
3. Dirac delta function \label{theory: pulsesft} : Fourier transform of Dirac delta function is $\frac{1}{2\pi}$.
So $\mathcal{F}(\psi) = \frac{1}{2\pi}$.  
\end{property}
\section{Environment.}\label{appendix:environment}
\textbf{SoundSpaces.} \label{appendix:SoundSpaces}
SoundSpaces uses the Habitat simulator with the publicly available Replica and Matterport3D environments and the public SoundSpaces audio simulation.
The 18 replica environments are grids constructed based on accurate scans of apartments, offices, hotels, and rooms. 
The 85 Matterport3D environments are real homes and other indoor environments with 3D grids and image scanning.
Use SoundSpaces' room impulse response (RIR) to place the audio source and environmental attackers in a 3D environment, and then simulate realistic sound at each position in the scene, where the spatial resolution of the copy is 0.5m, and the spatial resolution of Matterport3D is 1m.
The robot can walk in space while receiving real-time egocentric visual and audio observations.

\textbf{Rewards.} \label{appendix:reward}
According to the classic navigation rewards, if the robot successfully reaches the target and executes the \emph{Stop} action, the environment will reward it with $+ 10$, plus an additional bonus of $0.25$ to reduce the Manhattan distance from the robot to the target. 
Furthermore, increase the equivalent penalty for this target. 
Finally, we impose a time penalty of $ -0.01 $ on each action performed to encourage efficiency.

\textbf{Action space.} \label{appendix:ActionSpace}
SoundSpaces maintains a navigability graph of the environment (unknown to the agent).  
The agent can only move from one node to another if an edge is connecting them and the agent is facing that direction.
The action space of the agent are navigation motions: $\mathcal{A}^{\omega} =\{\emph{MoveForward},\emph{TurnLeft},\emph{TurnRight},\emph{Stop}\}$.
An environment attacker embodied in the environment must take actions from a hybrid action space $\mathcal{A}^{\nu}\,$. 
The hybrid action space are Cartesian product of navigation motions space $\mathcal{A}^{\nu,\text{position}}$,volume of sound space $\mathcal{A}^{\nu,\text{volume}}$ and category of sound space $\mathcal{A}^{\nu,\text{category}}$: $\mathcal{A}^{\nu} =\mathcal{A}^{\nu,\text{position}} \times \mathcal{A}^{\nu,\text{volume}} \times \mathcal{A}^{\nu,\text{category}}\,$,where $\mathcal{A}^{\nu,\text{position}} = \mathcal{A}^{\omega}$ is navigation motion space, $\mathcal{A}^{\nu,\text{volume}} = \{0.0, 0.1, 0.2,\ldots, 1.0\} $ is discrete action space with $| \mathcal{A}^{\nu,\text{volume}} | = 11 \,$,
$\mathcal{A}^{\nu,\text{category}} = \{'telephone', 'person', \ldots\}$ is discrete action space with $| \mathcal{A}^{\nu,\text{category}} | = 101\,$ respectively, We experiment with 101 everyday sounds.We use 101 natural sounds without duplication in SoundSpaces, covering various categories: birds, air conditioners, doorbells, door openings, music, computer beeps, fans, talkers, phones, and etc.

\section{Implementation details.}\label{appendix:modeldetails}
In the following, we provide details of our reinforcement learning (RL) for navigation tasks.
The environment attacker embodied in an environment must take actions from an action space $\mathcal{A}^{\nu}$ to make a change of the acoustically simple environment of SoundSpaces.
An agent embodied in an environment must take actions from an action space $\mathcal{A}^{\omega}$ to accomplish an end goal.  \\
At every time step, $t = \{0, 1, 2, \ldots, T-1\}$ the environment is in some state, but the environment attacker and the agent obtain only a partial observation of it in the form of $s_{t}$. 
Here T is a maximal time horizon, corresponding to 500 actions in a scene of a scene for our task. 
The observation $s_{t}$ is a combination of the audio, and visual inputs.  \\
The environment attacker develops a hybrid policy
$\pi^{\nu}_{t,\theta}= \pi^{\nu,\text{position}}_{t,\theta} \times \pi^{\nu,\text{volume}}_{t,\theta} \times \pi^{\nu,\text{category}}_{t,\theta}$, where $\pi^{\nu,\text{position}}_{t,\theta}\,$: $\mathcal{A}^{\nu,\text{position}} \rightarrow \mathbb{R}^{3}\,$, $\pi^{\nu,\text{volume}}_{t,\theta}\,$ : $ \mathcal{A}^{\nu,\text{volume}}  \rightarrow \mathbb{R}^{11}\,$,
$\pi^{\nu,\text{category}}_{t,\theta}\,$ : $ \mathcal{A}^{\nu,\text{category}} \rightarrow \mathbb{R}^{101}\,$ respectively,
and where $\pi^{\nu}_{t,\theta}(a_{t} \mid s_{t},\,h_{t-1})\,$ is the probability that the environment attacker chooses to take action $a^{\nu} \in \mathcal{A}^{\nu}\,$ at time t given the current observation $s_{t}\,$ and aggregated past states $h_{t-1}\,$.

Using information about the previous time steps $h_{t-1}$ and current observation $s_t$, the agent develops a policy $\pi^{\omega}_{\theta}\,$: $\mathcal{A}^{\omega} \rightarrow \mathbb{R}^{4}\,$, where $\pi^{\omega}_{t,\theta}(a_{t} \mid s_{t},\,h_{t-1})\,$ is the probability that the agent chooses to take action $a^{\omega} \in \mathcal{A}^{\omega}\,$ at time t given the current observation $s_{t}\,$ and aggregated past states $h_{t-1}\,$.

After the environment attacker and the agent acts, the environment goes into a new state $s_{t+1}\,$ and the agent and attacker receives individual rewards $r^{\omega}_{t} \in \mathbb{R}\,$ and $r^{\nu}_{t} \in \mathbb{R}\,$ respectively.

The agent optimizes its return, i.e. the expected discounted, cumulative rewards $G^{\omega}_{\gamma, t}=\sum_{t=0}^{T-1} \gamma^{t} r^{\omega}_{t}$.
The environment attacker also optimizes its return, i.e. the expected discounted, cumulative rewards $G^{\nu}_{\gamma, t}=\sum_{t=0}^{T-1} \gamma^{t} r^{\nu}_{t}$,where $\gamma \in \left[0, 1 \right]\,$ is the discount factor to modulate the emphasis on recent or long term rewards. The value function $V^{\omega}_{t,\theta}(s_{t},h_{t-1})\,$ and $V^{\nu}_{t,\theta}(s_{t},h_{t-1})\,$ is the expected return for agent and attacker respectively. 
We optimize the particular reinforcement learning objective directly following from Proximal Policy Optimization(PPO). 
We refer the readers to PPO for additional details on optimization.

We train our model with Adam with a learning rate of $2.5 \times 10^{-4}$.  
The auditory and visual encoder output are 512 and 512, respectively.
We use a one-layer bidirectional GRU with 512 hidden units that take [It, bt] as input. We use an entropy loss on the policy distribution with a coefficient of 0.01.  
We train the network for $30M$ agent steps on Replica and $60M$ on Matterport3D, which amounts to 150 and 320 GPU hours, respectively.

Following SoundSpace, we first compute the Short-Time Fourier Transform (STFT) with a hop length of 160 samples and a windowed signal length of 512 samples, which corresponds to a physical duration of 12 and 32 milliseconds at a sample rate of 44100Hz (Replica) and 16000Hz (Matterport). 
STFT gives a $257 \times 257$ and a $257 \times 101$ complex-valued matrix, respectively, for a one-second audio clip; 
we take its magnitude, downsample both axes by a factor of $4$, and take the logarithm. 
Finally, we stack the left and right audio channel matrices to obtain a $65\times65\times2$ and a $65\times26\times2$ tensor.

\section{SoundSpaces dataset in details.}\label{appendix:dataset}
We used the SoundSpaces dataset. 
Table ~\ref{tab:dataset} summarizes SoundSpaces dataset,which includes audio renderings for the Replica and  Matterport3D datasets. Each episode consists of a tuple: $\langle$scene, agent start location, agent start rotation, goal location, audio waveform$\rangle$. 
Episodes were generated by choosing a scene and a random start and goal location.

\begin{table}[!htb]
\setlength{\tabcolsep}{2pt}
\centering
\caption{\small{Summary of SoundSpaces dataset properties}}
\resizebox{0.95\linewidth}{!}{%
\begin{tabular}{c|c|c|c|c|c|c|c}
\hline
Dataset         & \# Scenes & Resolution & Sampling Rate  & Avg. \# Node  & Avg. Area     & \# Training Episodes & \# Test Episodes\\
\hline
Replica         & 18        & 0.5m       & 44100Hz  & 97            & 47.24  $m^2$  & 0.1M          & 1000 \\
Matterport3D    & 85        & 1m         & 16000Hz & 243           & 517.34 $m^2$  & 2M            & 1000 \\
\hline
\end{tabular}%
}
\label{tab:dataset}
\end{table}

\section{Metrics in details.}\label{appendix:metrics}
Next, we narrate the navigation metrics used in Section 4 of the body paper.

\begin{enumerate}[leftmargin=*]
    \item Success weighted by Path Length (SPL): weighs the successful episodes with the ratio of the shortest path $l_i$ to the executed path $p_i$, $\mathrm{SPL} = \frac{1}{N}\sum_{i=1}^{N}S_i\frac{l_i}{\mathrm{max}(p_i,l_i)}$, where $S_{i}$ is binary indicator of success in episode $i$, and $N$ is the number of episodes.
    \item $R_{\text{mean}}$: stands for average episode reward of agent. See Appendix ~\ref{appendix:reward} for more details on reward.
    \item Soft Success weighted by Path Length (SSPL, or SoftSPL): Similar to SPL with a relaxed soft-success criteria. $\,\mathrm{SSPL} = \frac{1}{N}\sum_{i=1}^{N} \max (0,1- \frac{d^{a}_{i}}{d_{i}}) \frac{l_i}{\mathrm{max}(p_i,l_i)}$, where $d^{a}_{i}$ is the distance from the agent's current position to the goal when episode $i$ is finished, $d_{i}$ is the distance from the agent's start position to the goal in the episode $i$, $l_i$ is the shortest path, $p_i$ is the executed path.
    \item Success Rate (SR):  The fraction of completed episodes, the agent reaches the goal within the time limit of 500 steps and selects the stop action precisely at the goal location, $\mathrm{SR}=\frac{1}{N}\sum^{N}_{i=1}S_{i}$.
    \item Average Distance To Goal (DTG): The agent's average distance to the goal when episodes are finished, $\,\mathrm{DTG}=\frac{1}{N}\sum^{N}_{i=1}d^{a}_{i}\,$, where $d^{a}_{i}$ is the distance from the agent's current position to the goal when episode $i$ is finished.
    \item Normalized average Distance To Goal (NDTG) : $\mathrm{NDTG}=\frac{1}{N}\sum^{N}_{i=1}\frac{d^{a}_{i}}{d_{i}}\,$, where $d^{a}_{i}$ is the distance from the agent's current position to the goal when episode $i$ is finished, $d_{i}$ is the distance from the agent's start position to the goal in the episode $i$.
\end{enumerate}

\section{Algorithm parameters.}\label{appendix:parameters}
The parameters used in our model are shown in Table ~\ref{tab:parameters}.
\begin{table}[!ht]
\centering
\caption{Algorithm parameters}
\label{tab:parameters}
\begin{tabular}{lll} 
\hline
Parameter                & Replica & Matterport3D    \\ 
\hline
RIR sampling rate      & 44100   & 16000   \\
clip param              & 0.1     & 0.1     \\
ppo epoch               & 4       & 4       \\
num mini batch         & 1       & 1       \\
value loss coef        & 0.5     & 0.5     \\
entropy coef            & 0.02    & 0.02    \\
learning rate                       & $2.5 \times 10^{-4}$  & $2.5 \times 10^{-4}$  \\
max grad norm          & 0.5     & 0.5     \\
num steps               & 150     & 150     \\
use gae                 & True    & True    \\
use linear learning rate  decay   & False   & False   \\
use linear clip decay & False   & False   \\
$\gamma$                    & 0.99    & 0.99    \\
$\tau$                      & 0.95    & 0.95    \\
$\beta$                      & 0.01    & 0.01    \\
reward window size     & 50      & 50      \\
success reward          & 10.0    & 10.0    \\
salck reward            & -0.01   & -0.01   \\
distance reward scale  & 1.0     & 1.0     \\
hidden size             & 512     & 512     \\
w1                       & \nicefrac{1}{6}       & \nicefrac{1}{6}       \\
w2                       & \nicefrac{1}{6}       & \nicefrac{1}{6}       \\
w3                       & \nicefrac{1}{6}       & \nicefrac{1}{6}       \\
w4                       & \nicefrac{1}{2}       & \nicefrac{1}{2}        \\
\hline
\end{tabular}
\end{table}
\section{Experiments results in detail.}\label{appendix:moreExp}
It can be concluded from the main paper that AVN's navigation capabilities are better than SA-MDP.
So we focus on comparing the navigation capabilities of AVN and our model SAAVN on dataset Matterport3D. 
We select the model variant SAAVN: PVC for experimental comparison to prove the effectiveness of our model.
Later we will show navigation results on dataset Replica.  \\
\paragraph{Navigation results on dataset Matterport3D.}
Table ~\ref{tab:simpleMp3dAll} shows the comparative experiments of different models on the dataset Matterport3D in a clean environment.
It can be seen from Table ~\ref{tab:simpleMp3dAll} that our model is the best under all metrics in a simple or clean environment.

\begin{table}[h]
\centering
\caption{
Performance comparison of different models,
which was tested in a clean environment under all the metrics in detail on dataset Matterport3D. 
Results are averaged over 5 test runs.
}
\label{tab:simpleMp3dAll}
\resizebox{1\linewidth}{!}{
\begin{tabular}{lllllll} 
\hline
Method          & SPL ($\uparrow$)        & SSPL ($\uparrow$)       & SR ($\uparrow$)         & $R_{mean}$ ($\uparrow$)  & DTG ($\downarrow$)      & NDTG ($\downarrow$)                      \\ 
\hline
Random          & 0.000$\pm$0.000      &  0.028$\pm$0.000         & 0.000$\pm$0.000          & -5.0$\pm$0.0        &  25.00$\pm$0.00         & 1.108$\pm$0.000                           \\
AVN ~\citep{avn}          & 0.539$\pm$0.002          & 0.558$\pm$0.002          & 0.696$\pm$0.004          & 18.1$\pm$0.2          & 11.85$\pm$0.19          & 0.300$\pm$0.006           \\
SAAVN:PVC(Ours) & \textbf{0.549$\pm$0.009} & \textbf{0.572$\pm$0.008} & \textbf{0.698$\pm$0.009} & \textbf{18.7$\pm$0.2} & \textbf{11.20$\pm$0.11} & \textbf{0.282$\pm$0.006}  \\
\hline
\end{tabular}
}
\end{table}
\begin{table}[!htb]
\centering
\caption{
Performance comparison of different models, 
tested in acoustically complex environments under different metrics in detail on dataset Matterport3D.
Results are averaged over 5 test runs. The acoustically complex environment is PVC.
}
\label{tab:complexEnvMp3dAll}
\resizebox{1\linewidth}{!}{
\begin{tabular}{lllllll} 
\hline
                & SPL ($\uparrow$)        & SSPL ($\uparrow$)       & SR ($\uparrow$)         & $R_{mean}$ ($\uparrow$)  & DTG ($\downarrow$)      & NDTG ($\downarrow$)                      \\ 
\hline
Random          & 0.000$\pm$0.000      &  0.027$\pm$0.000         & 0.000$\pm$0.000          & -5.0$\pm$0.0        &  25.01$\pm$0.00         & 1.116$\pm$0.000                           \\
AVN ~\citep{avn}          & 0.397$\pm$0.006          & 0.429$\pm$0.004          & 0.612$\pm$0.005          & 15.3$\pm$0.1          & 13.65$\pm$0.13          & 0.374$\pm$0.005           \\
SAAVN:PVC(Ours) & \textbf{0.478$\pm$0.006} & \textbf{0.508$\pm$0.004} & \textbf{0.660$\pm$0.004} & \textbf{17.3$\pm$0.1} & \textbf{12.11$\pm$0.08} & \textbf{0.315$\pm$0.002}  \\
\hline
\end{tabular}
}
\end{table}

\begin{table}[!htb]
\centering
\caption{
Comparison of different models in the acoustically complex environments with sound attacker under SPL and $R_{\text{mean}}$ in details on Matterport3D. Results are averaged over 5 test runs.
}
\label{tab:complexEnvMp3dSPL}
\resizebox{0.5\linewidth}{!}{
\begin{tabular}{lll} 
\hline
\multirow{2}{*}{Method} & \multicolumn{2}{c}{Complex env.}  \\ 
\cline{2-3}
                        & \multicolumn{1}{c}{$(PVC)_{\text{random}}$} & \multicolumn{1}{c}{$PVC$}           \\ 
\hline
Random                  & 0.000/-5.1                    & 0.000/-5.0            \\
AVN ~\citep{avn}                  &   0.397/15.3    &  0.397/15.3  \\
SAAVN:PVC(Ours)         &  \textbf{0.473/17.0}  &  \textbf{0.478/17.3}  \\
\hline
\end{tabular}
}
\end{table}

The experimental comparison of different models on the dataset Matterport3D in an acoustically complex environment is shown in Table ~\ref{tab:complexEnvMp3dAll}.
It can be seen from Table ~\ref{tab:complexEnvMp3dAll}  that our model is the best under all metrics in all the acoustically complex environments.   \\
Complex env in the Table ~\ref{tab:complexEnvMp3dSPL} stands for the  acoustically complex environments which includes $(PVC)_{\text{random}}$ and $PVC$. 
As can be seen from Table ~\ref{tab:complexEnvMp3dSPL}, for the two acoustically complex environments $(PVC)_{\text{random}}$ and $PVC$, our model wins under both SPL and $R_{\text{mean}}$ metrics.  \\
From the above Table ~\ref{tab:simpleMp3dAll}, Table ~\ref{tab:complexEnvMp3dAll} and Table ~\ref{tab:complexEnvMp3dSPL}, it can be concluded that our model achieves better navigation capabilities than AVN in all environments on dataset Matterport3D under all metrics.

\paragraph{Navigation results on dataset Replica.}
The results of comparative experiments in the clean environment of dataset Replica are shown in Table ~\ref{tab:simpleEnvReplica}.
It can be seen from Table ~\ref{tab:simpleEnvReplica} that for the dataset Replica, in a simple or clean environment, the performance of our model variants SAAVN:P, V=0.9, C=person6, and SAAVN:P, V=0.1, C=person6 under all metrics are better than AVN's, which fully shows that our model has more robust navigation capabilities in an acoustically simple environment. \\

\begin{table}[!htb]
\centering
\caption{
Comparison of AVN and SAAVN tested in the clean environment under all the metrics in detail on Replica. Results are averaged over 5 test runs.
}
\label{tab:simpleEnvReplica}
\resizebox{1\linewidth}{!}{
\begin{tabular}{lllllll} 
\hline
Method                                   & SPL ($\uparrow$)        & SSPL ($\uparrow$)       & SR ($\uparrow$)         & $R_{mean}$ ($\uparrow$)  & DTG ($\downarrow$)      & NDTG ($\downarrow$)                      \\ 
\hline
AVN ~\citep{avn}     & 0.721$\pm$0.006          & 0.749$\pm$0.005          & 0.897$\pm$0.004          & 15.1$\pm$0.1          & 0.68$\pm$0.06          & 0.059$\pm$0.003           \\
SAAVN:P,V=0.1,C=person6(Ours)${}^{\heartsuit} $            & \textbf{0.742$\pm$0.005}         & \textbf{0.757$\pm$0.006}          & \textbf{0.960$\pm$0.002} & \textbf{16.6$\pm$0.1} & \textbf{0.18$\pm$0.04} & \textbf{0.017$\pm$0.003}  \\
SAAVN:P,V=0.5,C=person6(Ours)            & 0.730$\pm$0.007          & 0.754$\pm$0.006          & 0.943$\pm$0.004          & 16.4$\pm$0.1          & 0.19$\pm$0.02          & 0.022$\pm$0.003           \\
SAAVN:P,V=0.9,C=person6(Ours)${}^{\clubsuit}$            & 0.755$\pm$0.004 & 0.776$\pm$0.003 & 0.948$\pm$0.007          & 16.4$\pm$0.1          & 0.22$\pm$0.03          & 0.022$\pm$0.003           \\
SAAVN:$P_{\text{fix}}$,V,C=person6(Ours) & 0.728$\pm$0.002          & 0.742$\pm$0.003          & 0.938$\pm$0.004          & 15.9$\pm$0.1          & 0.38$\pm$0.06          & 0.033$\pm$0.003           \\
SAAVN:$P_{\text{fix}}$,V=0.1,C(Ours)     & 0.685$\pm$0.009          & 0.700$\pm$0.008          & 0.904$\pm$0.007          & 15.1$\pm$0.1          & 0.68$\pm$0.04          & 0.060$\pm$0.004           \\
SAAVN:$P_{\text{fix}}$,V,C(Ours)         & 0.710$\pm$0.004          & 0.720$\pm$0.005          & 0.941$\pm$0.003          & 15.8$\pm$0.1          & 0.44$\pm$0.07          & 0.038$\pm$0.004           \\
SAAVN:P,V=0.1,C(Ours)                    & 0.723$\pm$0.007          & 0.747$\pm$0.006          & 0.919$\pm$0.004          & 15.6$\pm$0.1          & 0.49$\pm$0.05          & 0.043$\pm$0.003           \\
SAAVN:P,V=0.5,C(Ours)                    & 0.698$\pm$0.005          & 0.705$\pm$0.005          & 0.938$\pm$0.003          & 15.7$\pm$0.1          & 0.57$\pm$0.04          & 0.044$\pm$0.003           \\
SAAVN:P,V=0.9,C(Ours)                    & 0.686$\pm$0.005          & 0.695$\pm$0.004          & 0.954$\pm$0.005          & 16.1$\pm$0.1          & 0.33$\pm$0.06          & 0.025$\pm$0.004           \\
SAAVN:P,V,C=person6(Ours)                & 0.663$\pm$0.009          & 0.692$\pm$0.006          & 0.901$\pm$0.009          & 15.3$\pm$0.2          & 0.52$\pm$0.08          & 0.047$\pm$0.006           \\
SAAVN:P,V,C(Ours)                        & 0.664$\pm$0.010          & 0.677$\pm$0.008          & 0.902$\pm$0.012          & 14.8$\pm$0.3          & 0.87$\pm$0.11          & 0.064$\pm$0.008           \\
\hline
\end{tabular}
}
\vspace{1px}

 {\raggedright
 \small
 ${\heartsuit}$:Model variant SAAVN:P,V=0.1,C=person6 ranked second in performance on metrics SPL and SSPL, and ranked first in performance on metrics SR, $R_{\text{mean}}$, DTG and NDTG.
 
 \small
${\clubsuit}$: Model variant SAAVN:P, V=0.9, C=person6 ranks first in performance on metrics SPL and SSPL, ranked second in performance on metrics $R_{\text{mean}}$, DTG and NDTG, and ranked third in performance on the metric SR.

\small
Based on the above factors, we believe that model variant SAAVN:P, V=0.1, C=person6 is the best. So in Table ~\ref{tab:simpleEnvReplica} and the main paper, we mark model variant SAAVN:P, V=0.1, C=person6 as the best model.

 \par}
\end{table}

\begin{table}[!htb]
\centering
\caption{
Performance comparison of AVN and SAAVN, 
which is tested in acoustically complex environments with a sound attacker under all the metrics in detail on Replica. Results are averaged over 5 test runs.
}
\label{tab:complexEnvReplica}
\resizebox{1\linewidth}{!}{
\begin{tabular}{llllll:ll} 
\hline
Method                       & Complex Env.                                & \multicolumn{1}{c}{SPL ($\uparrow$)}  & \multicolumn{1}{c}{SSPL ($\uparrow$)} & \multicolumn{1}{c}{SR ($\uparrow$)}   & \multicolumn{1}{c}{$R_{\text{mean}}$ ($\uparrow$)} & \multicolumn{1}{c}{DTG ($\downarrow$)}  & \multicolumn{1}{c}{NDTG ($\downarrow$)}  \\ 
\hline
AVN ~\citep{avn}      & \multirow{2}{*}{P,V=0.1,C=person6}          & 0.609$\pm$0.005          & 0.678$\pm$0.004          & 0.712$\pm$0.010          & 11.0$\pm$0.2           & 2.20$\pm$0.06          & 0.194$\pm$0.004           \\
SAAVN(Ours)                  &                   & \textbf{0.657$\pm$0.005} & \textbf{0.726$\pm$0.005} & \textbf{0.799$\pm$0.006} & \textbf{13.2$\pm$0.1}  & \textbf{1.17$\pm$0.05} & \textbf{0.107$\pm$0.005}  \\ 
\hline
AVN ~\citep{avn}      & \multirow{2}{*}{P,V=0.5,C=person6}          & 0.516$\pm$0.014          & 0.565$\pm$0.011          & 0.712$\pm$0.012          & 10.7$\pm$0.2           & 2.44$\pm$0.09          & 0.207$\pm$0.005           \\
SAAVN(Ours)&                                & \textbf{0.689$\pm$0.010} & \textbf{0.748$\pm$0.007} & \textbf{0.841$\pm$0.009} & \textbf{14.4$\pm$0.1}  & \textbf{0.72$\pm$0.02} & \textbf{0.074$\pm$0.001}  \\ 
\hline
AVN ~\citep{avn}       & \multirow{2}{*}{P,V=0.9,C=person6}          & 0.413$\pm$0.003          & 0.470$\pm$0.004          & 0.657$\pm$0.008          & 9.8$\pm$0.2            & 2.66$\pm$0.10         & 0.228$\pm$0.009           \\
SAAVN(Ours)&                        & \textbf{0.692$\pm$0.008} & \textbf{0.748$\pm$0.005} & \textbf{0.858$\pm$0.009} & \textbf{14.7$\pm$0.1}  & \textbf{0.70$\pm$0.04} & \textbf{0.066$\pm$0.003}  \\ 
\hline
AVN ~\citep{avn}           & \multirow{2}{*}{$P_\text{fix}$,V,C=person6} & 0.549$\pm$0.009          & 0.598$\pm$0.007          & 0.723$\pm$0.009          & 11.0$\pm$0.2           & 2.33$\pm$0.06          & 0.196$\pm$0.006           \\
SAAVN(Ours)                  &          & \textbf{0.648$\pm$0.010} & \textbf{0.724$\pm$0.002} & \textbf{0.804$\pm$0.016} & \textbf{13.6$\pm$0.2}  & \textbf{0.93$\pm$0.06} & \textbf{0.086$\pm$0.005}  \\ 
\hline
AVN ~\citep{avn}      & \multirow{2}{*}{$P_\text{fix}$,V=0.1,C}     & 0.576$\pm$0.005          & 0.636$\pm$0.008          & 0.694$\pm$0.008          & 10.5$\pm$0.2           & 2.58$\pm$0.10          & 0.222$\pm$0.010           \\
SAAVN(Ours)                  &               & \textbf{0.609$\pm$0.004} & \textbf{0.645$\pm$0.004} & \textbf{0.793$\pm$0.007} & \textbf{12.7$\pm$0.1}  & \textbf{1.55$\pm$0.03} & \textbf{0.134$\pm$0.002}  \\ 
\hline
AVN ~\citep{avn}      & \multirow{2}{*}{$P_\text{fix}$,V,C}         & 0.361$\pm$0.012          & 0.502$\pm$0.011          & 0.460$\pm$0.010          & 7.6$\pm$0.2            & 3.83$\pm$0.13          & 0.380$\pm$0.013           \\
SAAVN(Ours)&                & \textbf{0.638$\pm$0.010} & \textbf{0.683$\pm$0.007} & \textbf{0.820$\pm$0.013} & \textbf{13.4$\pm$0.2}  & \textbf{1.14$\pm$0.08} & \textbf{0.103$\pm$0.009}  \\ 
\hline
AVN ~\citep{avn}      & \multirow{2}{*}{P,V=0.1,C}                  & 0.563$\pm$0.005          & 0.630$\pm$0.002          & 0.681$\pm$0.006          & 10.3$\pm$0.1           & 2.60$\pm$0.06          & 0.226$\pm$0.004           \\
SAAVN(Ours)                  &              & \textbf{0.667$\pm$0.009} & \textbf{0.714$\pm$0.008} & \textbf{0.826$\pm$0.012} & \textbf{13.4$\pm$0.3}  & \textbf{1.26$\pm$0.10} & \textbf{0.102$\pm$0.007}  \\ 
\hline
AVN ~\citep{avn}      & \multirow{2}{*}{P,V=0.5,C}                  & 0.362$\pm$0.007          & 0.492$\pm$0.006          & 0.469$\pm$0.010          & 7.5$\pm$0.1            & 3.89$\pm$0.04          & 0.384$\pm$0.004           \\
SAAVN(Ours)                  &              & \textbf{0.644$\pm$0.006} & \textbf{0.665$\pm$0.006} & \textbf{0.878$\pm$0.003} & \textbf{14.4$\pm$0.0}  & \textbf{0.97$\pm$0.02} & \textbf{0.080$\pm$0.001}  \\ 
\hline
AVN ~\citep{avn}          & \multirow{2}{*}{P,V=0.9,C}         & 0.269$\pm$0.003          & 0.426$\pm$0.008          & 0.364$\pm$0.006          & 6.2$\pm$0.2            & 4.36$\pm$0.12          & 0.453$\pm$0.012           \\
SAAVN(Ours)                  &          & \textbf{0.620$\pm$0.004} & \textbf{0.673$\pm$0.003} & \textbf{0.801$\pm$0.009} & \textbf{13.2$\pm$0.1}  & \textbf{1.29$\pm$0.06} & \textbf{0.112$\pm$0.003}  \\ 
\hline
AVN ~\citep{avn}      & \multirow{2}{*}{P,V,C=person6}              & 0.541$\pm$0.004          & 0.590$\pm$0.002          & 0.722$\pm$0.008          & 11.0$\pm$0.1           & 2.30$\pm$0.04          & 0.195$\pm$0.005           \\
SAAVN(Ours)                  &       & \textbf{0.630$\pm$0.005} & \textbf{0.688$\pm$0.004} & \textbf{0.777$\pm$0.009} & \textbf{12.7$\pm$0.2}  & \textbf{1.43$\pm$0.08} & \textbf{0.129$\pm$0.006}  \\ 
\hline
AVN ~\citep{avn}     & \multirow{2}{*}{P,V,C}         & 0.389$\pm$0.009          & 0.516$\pm$0.005          & 0.493$\pm$0.012          & 8.0$\pm$0.2            & 3.69$\pm$0.08          & 0.359$\pm$0.008           \\
SAAVN(Ours)                  &          & \textbf{0.552$\pm$0.004} & \textbf{0.598$\pm$0.004} & \textbf{0.716$\pm$0.003} & \textbf{10.6$\pm$0.1}  & \textbf{2.56$\pm$0.05} & \textbf{0.207$\pm$0.003}  \\
\hline
\end{tabular}
}
\end{table}

The comparative experiments in the acoustically complex environments on the dataset Replica are shown in Table ~\ref{tab:complexEnvReplica}.
Complex Env in the Table ~\ref{tab:complexEnvReplica} stands for acoustically complex environments. 
It can be seen from Table ~\ref{tab:complexEnvReplica} that our various model variants achieve better navigation capabilities than AVN respectively in different acoustically complex environments on dataset Replica under all metrics. \\

The data in Table ~\ref{tab:simpleEnvReplica} and Table ~\ref{tab:complexEnvReplica} can be visualized with the histogram of Figure ~\ref{fig:comReplica}. 
Figure ~\ref{fig:comReplica} shows the comparative experimental results of dataset Replica in all environments.
The two bars on the left of each box in the Fig.~\ref{fig:comReplica} are comparisons of the clean environments, and the two bars on the right are comparisons of the acoustically complex environments.
It can be seen from the histogram that our model performs better than AVN in both simple clean environment and acoustically complex environment.  \\
\begin{figure}
    \centering
    \includegraphics[width=0.95\linewidth,scale=1.5]{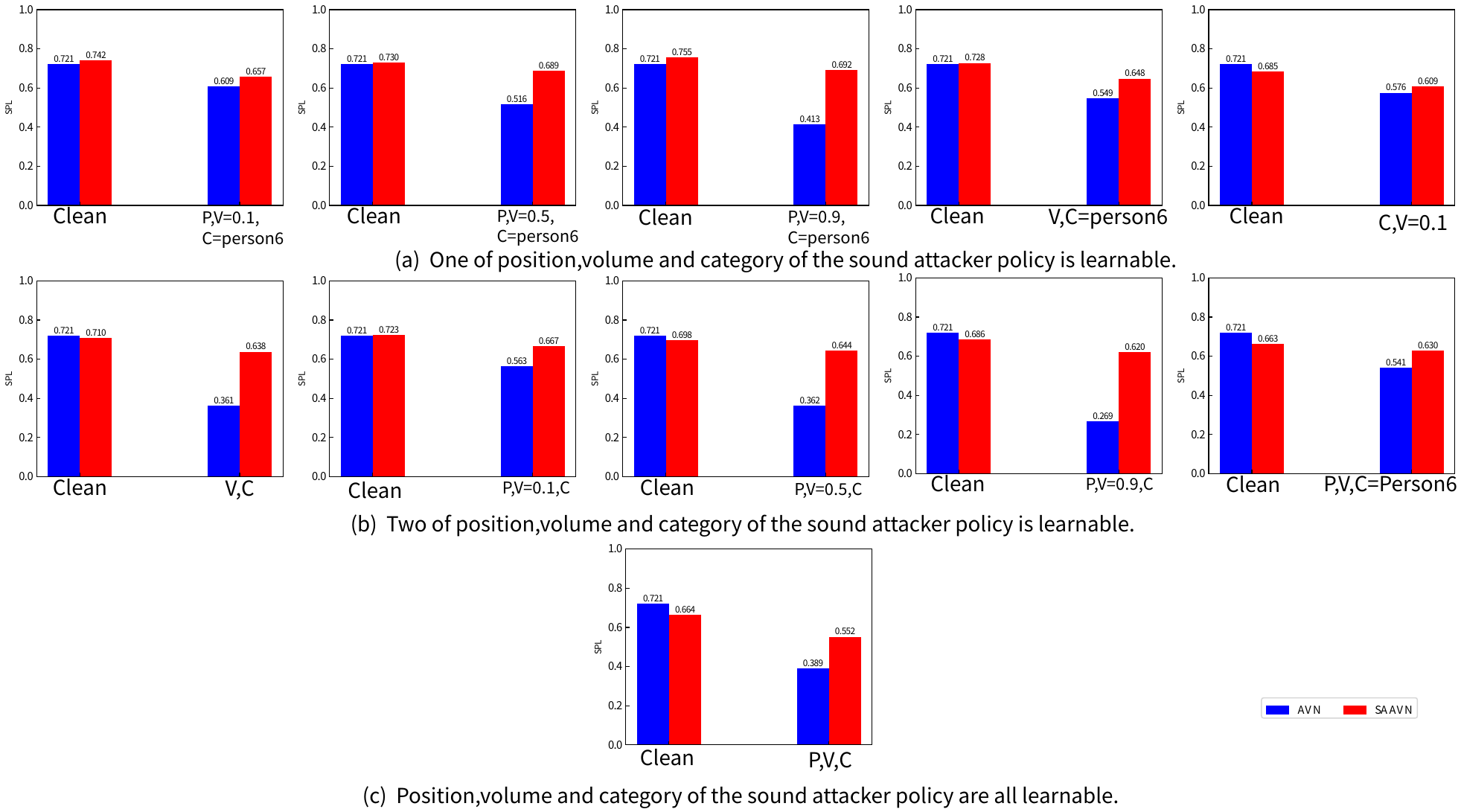}
    \caption{Comparison in the clean environment and different acoustically complex environments between AVN and SAAVN on Replica. }
    \label{fig:comReplica}
\end{figure}

\section{Trajectory examples in details.}\label{appendix:moreTraj}
\paragraph{\textbf{Trajectory comparisons on dataset Replica. }}
Figure ~\ref{fig:traj} shows the test episodes for our SAAVN model on dataset Replica. 
The environment of each row in Fig. ~\ref{fig:traj} is consistent, and the model of each column is too. 
SA-MDP failed to complete the task successfully in all two environments. AVN can complete the job in a clean environment but fail to reach the target in an acoustically complex environment.
SAAVN completed the tasks in all two environments. 
What is more, the navigation track of SAAVN in an acoustically simple environment is shorter than that of AVN. 
These fully demonstrate the navigation capabilities of SAAVN. \\

\begin{figure}[!htb]
    \centering
    \includegraphics[width=0.95\linewidth,scale=1]{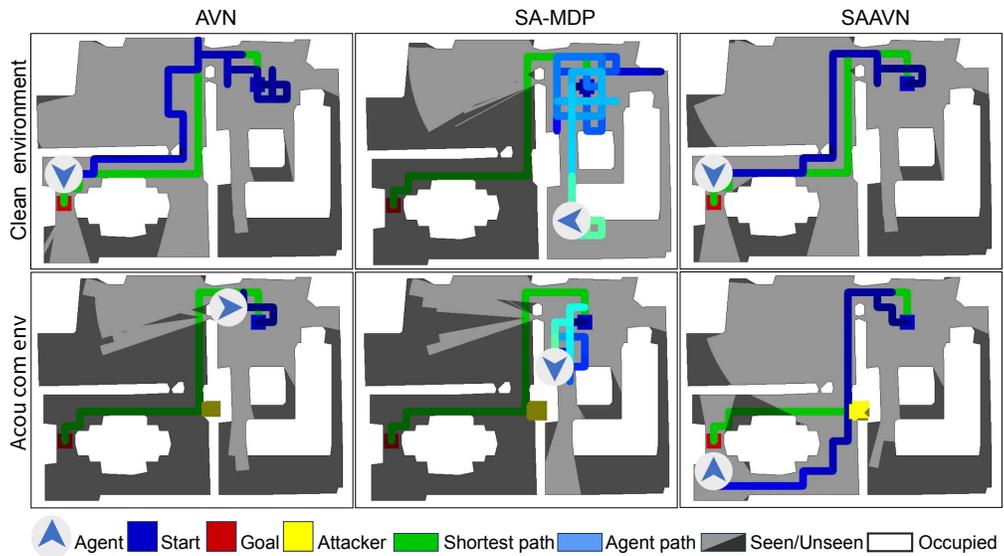}
    \caption{Trajectories of different models in different environments on dataset Replica. The first row in the figure is a clean environment, and the second line is a acoustically complex environment. Each column in the figure represents the same model. Acou com env stands for acoustically complex environment. }
    \label{fig:traj}
\end{figure}

\paragraph{Trajectory comparisons on dataset Matterport3D.}
Figure ~\ref{fig:trajMp3d1} and Figure ~\ref{fig:trajMp3d2} shows the test episodes for our SAAVN model on dataset Matterport3D. 
The environment of each row in the figures is consistent, and the model of each column is too. 
AVN can complete the task in a clean environment but fails in an acoustically complex environment.
SAAVN completed the tasks in all two environments. 
What is more, the navigation track of SAAVN in an acoustically simple environment is shorter than that of AVN. 
These fully demonstrate the navigation capabilities of SAAVN.

\begin{figure}[!htb]
    \centering
    \includegraphics[width=0.95\linewidth,scale=0.98]{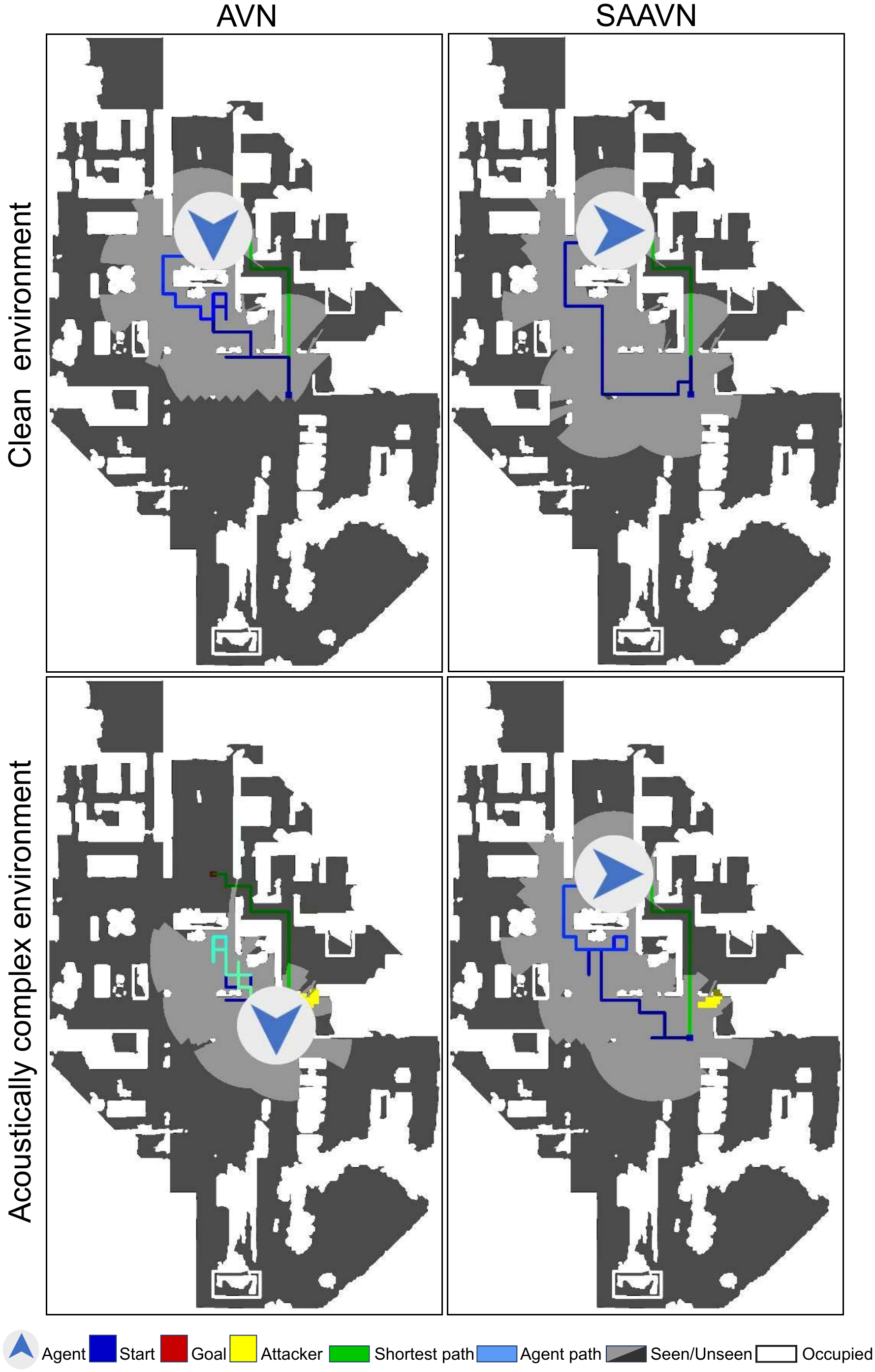}
    \caption{Trajectories of both AVN and SAAVN in different environments on dataset Matterport3D. The first row in the figure is a clean environment, and the second line is a acoustically complex environment. Each column in the figure represents the same model.  }
    \label{fig:trajMp3d1}
\end{figure}
\begin{figure}[!htb]
    \centering
    \includegraphics[width=0.95\linewidth,scale=0.98]{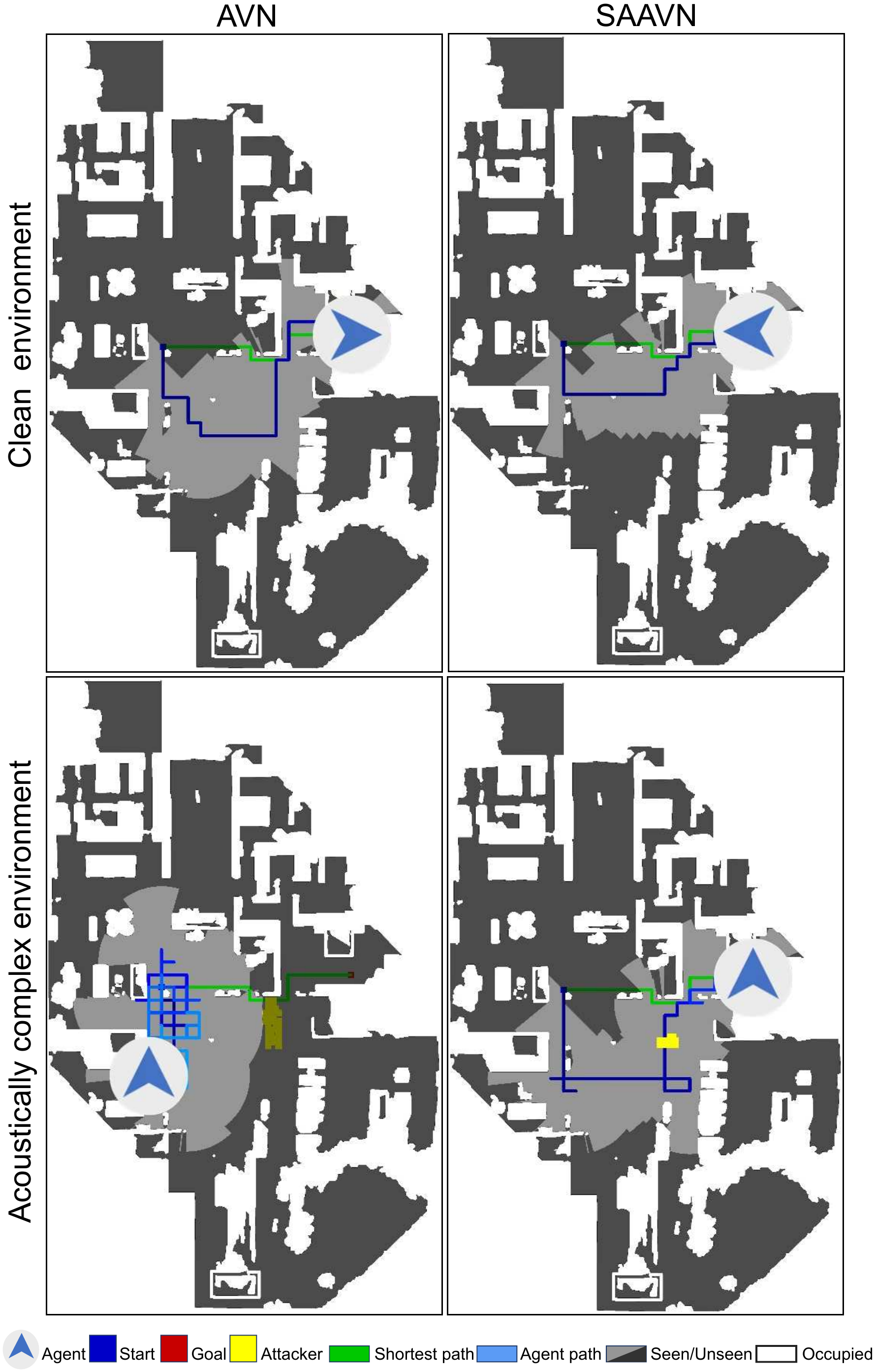}
    \caption{Trajectories of both AVN and SAAVN in different environments on dataset Matterport3D. The first row in the figure is a clean environment, and the second line is a acoustically complex environment. Each column in the figure represents the same model.  }
    \label{fig:trajMp3d2}
\end{figure}
\clearpage
\section{Independent learning framework does not converge in training. } \label{appendix: idl}
Conventional adversarial approaches usually train the agent's policy and the attacker's policy alternately and independently. 
Instead, our paper employs the multi-agent learning mechanism to train the two policies simultaneously, whose benefit in convergence is empirically demonstrated by Figure ~\ref{fig:idl} (by comparing our SAAVN with IDL). 
We have also designed a learning framework in the algorithm design process in which the agent and the sound attacker learn independently. 
It can be seen from Figure ~\ref{fig:idl} that the framework of \textit{independent learning}(IDL) by the agent and the sound attacker is not convergent.

\section{More discussion about our model SAAVN. } \label{appendix: morediscussion}

Although our work primarily focuses on intervention on the sound modality.
We have some discussion on the robustness of SAAVN against the visual modality intervention, the robustness of SAAVN on robot blindness, etc.
For more details, please see below.

\subsection{Robustness of SAAVN against the visual modality. } \label{appendix: visualNoiseDiscuss}
Our work primarily focuses on intervention on the sound modality.
What is robustness against the visual modality?
We add gaussian noise with different std and zero mean to the visual observation (depth image) and evaluation the performance of our model SAAVN on Matterport3D.
The result is demonstrated in Table ~\ref{tab: visualNoiseMp3d} show that When the intensity of intervention increases, performance decreases.
However, the decline is slow.
This phenomenon shows that our model SAAVN shows better robustness under visual modal intervention.

\begin{table}[h]
\centering
\caption{Performance (SPL ($\uparrow$)/$R_{mean}$ ($\uparrow$)) under  visual attacking on Matterport3D. }
\resizebox{0.3\linewidth}{!}{
\begin{tabular}{l|c} 
\hline
          Method                 &           SPL ($\uparrow$)/$R_{mean}$ ($\uparrow$)      \\   
\hline
 w/o noise &   0.478/17.3     \\
 std=0.01  &   0.476/17.2     \\
 std=0.05  &   0.475/16.9     \\
\hline
\end{tabular}
}
\label{tab: visualNoiseMp3d}
\end{table}

\subsection{Robustness of SAAVN against blindness. } \label{appendix: visualBlindDiscuss}
When the visual sensor is wholly removed, that is to say, and the robot is blind. 
What is the performance of our model and AVN?  
We design and do blindness experiments on Matterport3D. 
The experimental results in Table ~\ref{tab: visualBlindMp3d} show that the performance of SPL and $R_{\text{mean}}$ of model AVN and model SAAVN has decreased.
$\Delta SPL$ stands for the deterioration amount under metric SPL for a specific model.
The performance value calculates $\Delta SPL$ with removing visual sensor intervention minus the commission value under metric SPL without intervention under metric SPL.
$\Delta R_{\text{mean}}$ stands for the deterioration amount under metric $R_{\text{mean}}$ for a specific model.
$\Delta R_{\text{mean}}$ is calculated by the performance value with removing visual sensor intervention minus the commission value without intervention for a specific model under metric $R_{\text{mean}}$.
The performance degradation of SPL and $R_{\text{mean}}$ of model AVN is greater than that of model SAAVN.
It shows that the SAAVN model is more robust than the model AVN when faced with the complete removal of the vision sensor.

\begin{table}[h]
\centering
\caption{Performance (SPL ($\uparrow$)/$R_{mean}$ ($\uparrow$)) in the environment with a PVC attacker on Matterport3D.
Compared with AVN, our SAAVN performs more robustly without vision, which again exhibits the benefit of our adversarial training.
    }
\resizebox{0.65\linewidth}{!}{
\begin{tabular}{l|ccc}
\hline
Method  & Visual + Sound  & Sound  & $\Delta$ SPL/$\Delta$ $R_{mean}$   \\
\hline
AVN               & 0.397/15.3                & 0.196/9.2        & -0.201/-6.1             \\
SAAVN (Ours)      & 0.478/17.3                & 0.333/15.1       & \textbf{-0.145/-2.2}     \\             
\hline
\end{tabular}
}
\label{tab: visualBlindMp3d}
\end{table}

\subsection{Performance affect by sliding and skipping modes of $a^{{\nu},\text{vol}}$. } \label{appendix: slidskipDiscuss}
The action $a^{\nu, \text{vol}}$ can take by sliding and skipping modes. 
Suppose the current action $a^{\nu, \text{vol}}$ is 0.5. What is the next step action ?
If using the sliding mode, the next action of $a^{\nu, \text{vol}}$ is to select one from neibour actions ([0.4, 0.5, 0.6]) of current action.
If using the skip mode, the next action of  $a^{\nu, \text{vol}}$ is to select one from whole action space of  $a^{\nu, \text{vol}}$ ([0.0, 0.1, 0.2, 0.3, 0.4, 0.5, 0.6, 0.7, 0.8, 0.9, 1.0]).
How does the choice of action mode affect the performance of the model SAAVN? 
The experimental results on Matterport3D in Table ~\ref{tab: gradualaction} show that the performance of model SAAVN adopts the sliding mode is better than that of the skipping mode for action $a^{\nu, \text{vol}}$ under metrics SPL and $R_{\text{mean}}$.

\begin{table}[h]
\centering
\caption{ Performance under sliding and skipping modes of $a^{\nu, \text{vol}}$ on Replica. }
\resizebox{0.3\linewidth}{!}{
\begin{tabular}{l|c} 
\hline
Method         &  SPL ($\uparrow$)/$R_{mean}$ ($\uparrow$)    \\ 
\hline
Skipping    & 0.552/10.6  \\
Sliding    & 0.614/13.8  \\
\hline
\end{tabular}
}
\label{tab: gradualaction}
\end{table}

\subsection{Unseen Environments.}
We test our method by assigning the attacker with the sound of a random category that is unseen in training. 
The following table summarizes the results of AVN and SAAVN. 
As a reference, we also copy the results from Table~\ref{tab:complex} that are evaluated under the original "seen" setting. It suggests that SAAVN still performs desirably, particularly for the environments "V" and "C",  whereas AVN yields more significant detriment in these two cases (See Table~\ref{tab: unseen-env-tab-p},~\ref{tab: unseen-env-tab-v},~\ref{tab: unseen-env-tab-c}). 

\begin{table*}[h]
\centering
\caption{Performance in P and P Unseen environments under SPL ($\uparrow$)/$R_{mean}$ ($\uparrow$) on Replica. }
\label{tab: unseen-env-tab-p}
\resizebox{0.6\linewidth}{!}{
\begin{tabular}{l|ccc}
\hline
Method     & P  & P Unseen   &      $\Delta$ SPL/$\Delta$ $R_{mean}$  \\
\hline
AVN               & 0.609/11.0      & 0.568/10.2             & -0.041/-0.8                                                \\
SAAVN:P           & 0.657/13.2      & 0.612/11.2             & -0.045/-2.0                                                \\
\hline
\end{tabular}
}
\end{table*}

\begin{table*}[h]
\centering
\caption{Performance in V and V Unseen environments under SPL ($\uparrow$)/$R_{mean}$ ($\uparrow$) on Replica. }
\label{tab: unseen-env-tab-v}
\resizebox{0.6\linewidth}{!}{
\begin{tabular}{l|ccc}
\hline
Method     & V       & V Unseen    &   $\Delta$ SPL/$\Delta$ $R_{mean}$ \\
\hline
AVN               & 0.549/11.0      & 0.445/10.0             & -0.104/-1.0                                                \\
SAAVN:V           & 0.648/13.6      & 0.596/12.3             & -0.052/-1.3                                                \\
\hline
\end{tabular}
}
\end{table*}

\begin{table*}[h!]
\centering
\caption{Performance in C and C Unseen environments under SPL ($\uparrow$)/$R_{mean}$ ($\uparrow$) on Replica. }
\label{tab: unseen-env-tab-c}
\resizebox{0.6\linewidth}{!}{
\begin{tabular}{l|ccc}
\hline
Method                  & C               & C Unseen               & $\Delta$ SPL/$\Delta$ $R_{mean}$       \\
\hline
AVN               & 0.576/10.5      & 0.394/6.7              & -0.182/-3.8                                                \\
SAAVN:C           & 0.609/12.7      & 0.608/13.2             & -0.001/0.5                                                 \\
\hline
\end{tabular}
}
\end{table*}

\subsection{SAAVN: PVC achieves the best performance. }
Here, we further evaluate different methods in the same attack environment, "PVC," as follows. 
As expected, SAAVN: PVC achieves the best performance (See Table~\ref{tab: pvc-is-best}).

\begin{table*}[h!]
\centering
\caption{Evaluation of different variants under the same PVC attack environment (SPL ($\uparrow$)/$R_{mean}$ ($\uparrow$)) on Replica.  }
\label{tab: pvc-is-best}
\resizebox{0.6\linewidth}{!}{
\begin{tabular}{l|ccc}
\hline
Method   &  Env.             &  Env. PVC Unseen &  $\Delta$ SPL/$\Delta$ $R_{mean}$ \\
\hline
         &  P                &  PVC Unseen   &                      \\
SAAVN:P  &  0.657/13.2       &  0.286/6.1    &  -0.371/-7.1         \\
\hdashline
         &  V                &  PVC Unseen   &                       \\
SAAVN:V  &  0.648/13.6       &  0.415/9.2    &  -0.233/-4.4         \\
\hdashline
         &  C                &  PVC Unseen   &                      \\
SAAVN:C  &  0.609/12.7       &  0.394/9.0    &  -0.215/-3.7          \\
\hdashline
         &  PVC              &  PVC Unseen   &                          \\
SAAVN:PVC   &  0.552/10.6    &  0.548/10.5   &  -0.004/-0.1             \\
\hline
\end{tabular}
}
\end{table*}

\subsection{Multi-modal fusion ablation for sound attacker audio-visual navigation. }
Is concatenation the best choice for multi-modal fusion?
We did not explore which choice is the best for multi-modal fusion in the main paper since this is not the main focus of our paper (all methods share the same fusion setting). 
However, we do think this is an exciting point. 
Hence, besides concatenation, we have also conducted another fusion strategy using element-wise multiplication. 
Specifically, we first obtain the visual and audio embedding vectors of the same size (i.e., 512) and then compute the element-wise multiplication between these two vectors.
Table~\ref{tab: multi-modal-fusion-ablation-tab} reports the performance of SAAVN: PVC in the environment PVC under these two different fusion scenarios. 
Interestingly, the new fusion strategy is better than the original concatenation. 
Upon the observation here, we believe that our proposed method allows further extendibility and can facilitate various following studies. 

\begin{table*}[h]
\centering
\caption{Multi-modal fusion ablation on Replica. }
\label{tab: multi-modal-fusion-ablation-tab}
\resizebox{1.0\linewidth}{!}{
\begin{tabular}{c|cccc:cc} 
\hline
Fusion  & SPL ($\uparrow$) & SSPL ($\uparrow$) & SR ($\uparrow$) & $R_{mean}$ ($\uparrow$) & DTG ($\downarrow$) & NDTG ($\downarrow$)  \\ 
\hline
Concatenation & 0.552±0.004 & 0.598±0.004 & 0.716±0.003 & 10.6±0.1 & 2.56±0.05 & 0.207±0.003  \\
Element-wise multiply    & \textbf{0.592±0.005} & \textbf{0.635±0.005} & \textbf{0.768±0.010} & \textbf{11.8±0.2} & \textbf{2.05±0.09} & \textbf{0.164±0.007}                      \\
\hline
\end{tabular}
}
\end{table*}

\section{Codes.} \label{appendix:codes}
Codes are in the folder named \textbf{code} of the supplementary materials.
The usage and main introduction of the code are in the readme.MD.
\section{Video demonstrations on datasets Replica and Matterport3D.}\label{appendix:videos}
Video demonstrations of navigation trajectory in an episode of different models in the clean environment and acoustically complex environments are in the folder named \textbf{demoVideo} of the supplementary materials. 
There are two subfolders under this folder named DemonstrationOnDatasetReplica and DemonstrationOnDatasetMatterport3D, used to store trajectory videos on the dataset Replica the dataset Matterport3D, respectively. 
Each MP4 file is named according to the format: xxx\_in\_yyy\_env\_zzz, where xxx represents the name of the model, $xxx \in \{AVN,  SAAVN\}$, yyy represents the type of environment, $yyy \in \{ simple, complex\}$, and zzz represents the name of the datasets, $zzz \in \{ Replica, Matterport3D\}$.
\section{Broader Impact.}\label{appendix:broaderImpact}
Our research will help service robots provide better navigation services in home life scenes and indoor office scenes.
The setting of our research work is based on the complex acoustic environment. 
The gap between this research setting and the actual application setting is small, conducive to applying the research results to real application scenarios.
The limitation of our work is that we only assume one sound attacker and have not studied the scenario with two and more attackers, which will be left for future exploration.

\end{document}